\numberwithin{equation}{section}
\theoremstyle{plain}
\newtheorem{thm}{Theorem}[section]
\newtheorem{prop}[thm]{Proposition}
\theoremstyle{definition}
\theoremstyle{remark}
\newtheorem{rem}[thm]{Remark}
\begin{document}

\title[Future global stability for relativistic perfect fluids with $p=K\rho$, $1/3<K<1/2$]{Future global stability for relativistic perfect fluids with linear equations of state $p=K\rho$ where $1/3<K<1/2$}

\author[T.A. Oliynyk]{Todd A. Oliynyk}
\address{School of Mathematics\\
9 Rainforest Walk\\
Monash University, VIC 3800\\ Australia}
\email{todd.oliynyk@monash.edu}

\begin{abstract}
\noindent 
We establish the future stability of nonlinear perturbations of a class of homogeneous solutions to the relativistic Euler equations with a 
linear equation of state $p=K\rho$ on exponentially expanding FLRW spacetimes for the equation of state parameter values $1/3 < K < 1/2$.
\end{abstract}

\maketitle

\section{Introduction\label{intro}}

Relativistic perfect fluids on a prescribed spacetime $(M,\gt)$ are governed by the relativistic Euler equations given by\footnote{Our indexing conventions are as follows: lower case Latin letters, e.g. $i,j,k$,
will index spacetime coordinate indices that run from $0$ to $3$ while upper case Latin letters, e.g. $I,J,K$, will index spatial coordinate indices that run from
$1$ to $3$.} 
\begin{equation}
\nablat_i \Tt^{ij}=0 \label{relEulA}
\end{equation}
where 
\begin{equation*}
\Tt^{ij} = (\rho+p)\vt^i \vt^j + p \gt^{ij}
\end{equation*}
is the stress energy tensor, $\rho$ is the fluid proper energy density, $p$ is the fluid pressure, and $\vt^{i}$ is the fluid
four-velocity normalized by $\gt_{ij}\vt^i \vb^j=-1$. In this article, we will be interested in analyzing the relativistic
Euler equations
on exponentially expanding Friedmann-Lema\^{i}tre-Robertson-Walker (FLRW) spacetime of the form $(M,\gt)$
where
\begin{equation*}
M = (0,1]\times \Tbb^3
\end{equation*}
and\footnote{By introducing the change of coordinate $\tilde{t}=-\ln(t)$, the metric \eqref{FLRW} can be brought into the more recognizable form
\begin{equation*}
\gt = -d\tilde{t}\otimes d\tilde{t} + e^{2\tilde{t}}\delta_{ij}dx^I \otimes dx^J,
\end{equation*}
where now $\tilde{t} \in [0,\infty)$.
}
\begin{equation} \label{FLRW}
\gt = \frac{1}{t^2} g
\end{equation}
with
\begin{equation} \label{conformal}
g = -dt\otimes dt + \delta_{IJ}dx^I \otimes dx^J.
\end{equation}
It is important to note that, due to our conventions, the future is located in the direction of \textit{decreasing} $t$ and future timelike infinity is located at $t=0$. 
Consequently, we require that
\begin{equation*} 
\vt^0 < 0
\end{equation*}
in order to ensure that the four-velocity is future directed.

The future stability of nonlinear perturbations of homogeneous solutions to the relativistic Euler equations with a linear equation of 
state 
\begin{equation*} 
p = K \rho
\end{equation*}
on exponentially expanding FLRW spacetimes has been well studied for the 
parameter range
\begin{equation*} 
0\leq K\leq \frac{1}{3}.
\end{equation*}
The first such stability result
was, building
on the earlier stability results for the Einstein-scalar field system \cite{Ringstrom:2008}, established\footnote{In these articles, stability was established in the more difficult case where the fluid is coupled to the Einstein equation. However, the techniques used there also work in the simpler setting considered in this article where coupling to gravity is neglected.}
for the parameter values $0<K<1/3$ in the articles \cite{RodnianskiSpeck:2013,Speck:2012}. Stability results for the end points $K=1/3$ and $K=0$ were established later in \cite{LubbeKroon:2013} and \cite{HadzicSpeck:2015},
respectively. See also \cite{Friedrich:2017,LiuOliynyk:2018b,LiuOliynyk:2018a,Oliynyk:CMP_2016} for different proofs and perspectives, the articles \cite{LeFlochWei:2021,LiuWei:2021} for related stability results for fluids with nonlinear equations of state, and the articles \cite{FOW:2021,Ringstrom:2009,Speck:2013,Wei:2018} for stability results on other expanding FLRW spacetimes (e.g. power law expansion). The importance of all of 
these works is that they demonstrate spacetime expansion can suppress shock formation in fluids, which was first discovered in the Newtonian cosmological setting \cite{BrauerRendallReula:1994}. This should be compared to the work of \cite{Christodoulou:2007} where it is established that arbitrary small perturbations of a class of homogeneous solutions to the relativistic Euler equations, for relatively general equations of state, on Minkowski spacetime, which is a FLRW spacetime with spatial manifold $\Rbb^3$ and no expansion, form shocks in finite time.

For linear equations of states, the parameter $K$ determines the square of the sound speed, and consequently, it is natural to assume\footnote{While this restriction on the sound speed is often taken for granted and implicitly assumed, it is strictly speaking not necessary; see \cite{Geroch:2010} for an extended discussion.} 
that $K$ satisfies
\begin{equation} \label{KrangeC}
0\leq K \leq 1
\end{equation}
so that the propagation speed for the fluid is less than or equal to the speed of light.
When the sound speed is equal to the speed of light, that is $K=1$, it is well known that the irrotational relativistic Euler equations coincide, under a change of variables, with the linear wave equation. In this case, the future global existence of solutions on exponentially expanding FLRW spacetimes can be inferred from standard existence results for linear wave equations. This leaves us to consider the
parameter range
\begin{equation} \label{KrangeA}
\frac{1}{3}<K<1,
\end{equation}
which we will assume holds for the remainder of the article.

The asymptotic behavior of relativistic fluids on exponentially expanding FLRW spacetimes with a linear equation of state for $K$ satisfying \eqref{KrangeC} was investigated in the article \cite{Rendall:2004} by Rendall using formal expansions. In that article, Rendall observed that the formal expansions can become inconsistent for $K$ in the range \eqref{KrangeA} if the leading order term in the expansion of the four-velocity vanishes somewhere. In that case, he speculated that inconsistent behavior in the expansions could be due to inhomogeneous
features developing in the fluid density that would lead to the density contrast blowing up.  This possibility for instability in solutions to the relativistic Euler equations for the parameter range \eqref{KrangeA} was also commented on by Speck in \cite[\S 1.2.3]{Speck:2013}. There, Speck presents a heuristic analysis that suggest uninhibited growth should set in for solutions of the relativistic Euler equations for the parameter values
\eqref{KrangeA}. These speculations leave the existence of future global solutions to the relativistic Euler equations in doubt for $K$
satisfying \eqref{KrangeA}.

In this article, we rule out, under a small initial data hypothesis, the possibility of any pathologies developing in finite time for $K$ satisfying
\begin{equation*}
\frac{1}{3}<K < \frac{1}{2}
\end{equation*}
by establishing, for these parameter values, the future stability of nonlinear perturbations of a  class of homogeneous solutions, see \eqref{Homsol}, to the relativistic Euler equations on exponentially expanding FLRW spacetimes. For a precise statement of our stability result, see Theorem \ref{mainthm}, which is the main result of this article. This, of course, leaves open the possibility of finite time blow-up for $K$ satisfying $1/2 < K < 1$. As a first step towards understanding the behavior of solutions
in this regime, we establish in Theorem \ref{symthm} the future stability of $\Tbb^2$-symmetric nonlinear perturbations of the same class of homogenous solutions for the full parameter range $1/3 < K < 1$.  Here, the stability proof relies heavily on the $\Tbb^2$ symmetry that allows us to reduce the relativistic Euler equations to an essentially regular $1+1$ dimensional problem. It is unclear at the moment if one should
expect that this result will still hold for $K$ satisfying $1/2 \leq K < 1$ if the $\Tbb^2$-symmetry assumption is removed. We plan to revisit this interesting question in a separate article.

\subsection{Overview}
The proof of our main stability result, Theorem \ref{mainthm}, is based on the Fuschsian method for establishing the global existence of solutions to systems of hyperbolic equations that was first employed in \cite{Oliynyk:CMP_2016} and further developed in the articles \cite{BOOS:2020,FOW:2021,LiuOliynyk:2018b,LiuOliynyk:2018a}. This method 
relies on transforming the global existence problem for a given hyperbolic system into an existence problem for a
Fuchsian symmetric hyperbolic
equation of the form
\begin{equation*}
\Asc^0(t,\Wsc)\del{t}\Wsc+ \Asc^i(t,\Wsc)\del{i}\Wsc  = \frac{1}{t}\Af(t,\Wsc)\Pbb \Wsc + \Fsc(t,\Wsc)  
\end{equation*}
on a finite time interval $(0,T_0]$.
Once in this form, the existence of solutions on the time interval $(0,T_0]$ can be deduced, under a suitable smallness assumption
on the initial data specified at time $t=T_0$, from general existence theorems for such Fuchsian systems that have been established in
the articles  \cite{BOOS:2020,FOW:2021,LiuOliynyk:2018b,LiuOliynyk:2018a}. 

One of the main advantages of the Fuchsian method is that it brings into clear focus the structure in evolution
equations that ensures stability. For the relativistic Euler equations with $K>1/3$ on exponential expanding
FLRW spacetimes, the stability structure is particularly well hidden in the standard formulation. One of the main reasons for this is that the homogeneous solutions about which the solutions are perturbed are more
complicated for $K>1/3$ compared to when $K\leq 1/3$. As shown in Proposition \ref{Homprop}, the homogeneous solutions are of the form
\begin{equation*}
\bigl(\rho,\vt^i) = \biggl( \frac{\rho_c t^{\frac{2(1+K)}{1-K}}}{(t^{2\mu}+ e^{2u})^{\frac{1+K}{2}}},
-t^{1-\mu}\sqrt{e^{2u}+t^{2 \mu} }, t^{1-\mu }e^{u},0,0\biggr), \quad 0<t\leq 1,
\end{equation*}
where $\rho_c>0$ is a constant and $u(t)$ is a function satisfying\footnote{Note that $u$ is determined by solving
the IVP \eqref{HomeqB.1}-\eqref{HomeqB.2}.}
\begin{equation*}
u(t)=u(0)+\Ord(t^{2\mu}) \AND u'(t)=\Ord(t^{2\mu-1}).
\end{equation*}
Here, $\mu$ is a constant defined by
\begin{equation*}
\mu = \frac{3K-1}{1-K}
\end{equation*}
and $u(0)$ is the asymptotic value of $u$ at future time-like infinity. The fact that the spatial three-velocity 
$(\vt^I)=(t^{1-\mu }e^{u},0,0)$ of the homogeneous solution is no longer trivial, unlike for $K\leq 1/3$ where it is trivial, ultimately
results in a significant difference in the behavior under non-linear perturbations for the component $\vt^1$ versus the
components $\vt^2$ and $\vt^3$. This more complicated behavior is primarily responsible for the  increased difficulty
in establishing stability for $K>1/3$ compared to $K\leq 1/3$, and for obscuring the structure in the Euler equations that allows for
global existence, which
now requires a much more involved choice of variables to make apparent.

In this article, we transform the relativistic Euler equations into a suitable Fuchsian form in a number of steps. We start in Section
 \ref{relEulsec} with a formulation, see \eqref{relEulB}, of the relativistic Euler equations that was first employed in \cite{Oliynyk:CMP_2015}. We then modify this system by introducing a new density variable defined by \eqref{zetatdef}, which results in the system \eqref{relEulC}. In Section
 \ref{Ftrans},  we proceed by decomposing the conformal three-velocity $v_I$ into its length determined by the variables $u$, $w_1$ and into
 a normalized vector determined by the variables $w_2$, $w_3$; see  \eqref{cov2a}-\eqref{cov2c} for the relevant formulas. Here, $u$ only depends on $t$ and is used to parameterize a class of homogeneous solutions of relativistic Euler equations. After some straightforward, but lengthy
 calculations, we obtain two equivalent versions of Euler equations, now expressed in terms of the new variables $u$, $w_1$, $w_2$ and $w_3$,  given by \eqref{relEulD} and \eqref{relEulE}. We then use the second version \eqref{relEulE}, see Section \ref{Homsec}, to identify the ODE satisfied by $u$ that determines homogeneous solutions of the relativistic Euler equations. The existence of solutions to
 this ODE is established in Proposition \ref{Homprop}.

The transformation of the relativistic Euler equations into a suitable Fuchsian form
is then completed in Section \ref{Fform}, which results in the Fuchsian equation \eqref{relEulK}. The coefficients of this Fuchsian equation are analyzed in
Section \ref{coeff} in order to verify that this system satisfies the required properties in order to apply the existence theory
from \cite{BOOS:2020}. It is worth noting that the restriction $1/3<K<1/2$ occurs at this step. The type
of singular terms that appear, see Remark \ref{dtA0rem} below,  in the coefficients of \eqref{relEulK} are of the form $t^{2\mu-1}$ and $t^{\mu-1}$ for
$0\leq \mu \leq 1/2$, and $t^{1-2\mu}$ and $t^{-\mu}$ for $1/2<\mu$. In order to apply the existence theory from
\cite{BOOS:2020}, we need $\mu$ to satisfy $0< \mu <1$, which by the definition above, corresponds to
$1/3<K<1/2$.

The existence theory from \cite{BOOS:2020} is then applied to \eqref{relEulK} in the proof of Theorem \ref{mainthm}, which can be found
in Section \ref{stability}, to establish the future stability of nonlinear perturbations of the homogeneous solutions to the relativistic Euler
equations from Proposition \ref{Homprop}. Finally, in Section \ref{symsec}, we establish the future stability of $\Tbb^2$-symmetric nonlinear perturbations of the same class of homogenous solutions for the full parameter range $1/3 < K < 1$
by using the $\Tbb^2$ symmetry to reduce the relativistic Euler equations to an essentially regular $1+1$ dimensional problem. The precise statement of the stability result in this setting is given in Theorem \ref{symthm}.

\subsection{Outlook and future directions}
The most natural and physically relevant generalization of the results of this article would be to establish an analogous stability result for the coupled Einstein-Euler equations with a positive cosmological constant $\Lambda >0$ and  for $1/3<K<1/2$. While there are details to check, we expect that this result will follow from
a straightforward adaptation of the arguments from \cite{Oliynyk:CMP_2016}. The evidence for this expectation comes from the behavior of the term\footnote{Here, $v_i$ is the conformal fluid four-velocity defined by \eqref{cov1}.}  $t^{-2}\rho v_i v_j$, which is really the only possible problematic term that could,
if it grew too quickly as $t\searrow 0$, prevent the use of the gravitation variables used in \cite{Oliynyk:CMP_2016}
to bring the Einstein equations into Fuchsian form. However, by Theorem \ref{mainthm}, we know that
\begin{equation*}
\frac{1}{t^2}\rho v_i v_j = \Ord(t^2),
\end{equation*}
and so, we see that $t^{-2}\rho v_i v_j$ is well behaved as $t\searrow 0$. 

It for this reason that we have not considered the coupled
Einstein-Euler equations here, and instead, we have focused our attention on the relativistic Euler equations on an exponentially expanding FLRW background, which we believe
is advantageous because it not only simplifies the presentation, but also addresses all the essential technical difficulties.
With that said, we do plan to establish an analogous global existence result for the Einstein-Euler equations in a separate article.

\section{A symmetric hyperbolic formulation of the Relativistic Euler equations\label{relEulsec}}
The first step in transforming the relativistic Euler equations \eqref{relEulA} into a suitable Fuchsian form is to find a symmetric hyperbolic formulation of the relativistic Euler equations.
Here, we start with the symmetric hyperbolic formulation derived in \cite[\S 2.2]{Oliynyk:CMP_2015}, see also \cite[\S 2.2]{Oliynyk:CMP_2016}.
This involves introducing the \textit{conformal fluid four-velocity} $v_i$ and the \text{modified density} $\zeta$ according to
\begin{equation} \label{cov1}
v_i= \frac{1}{t} g_{ij}\vt^{j} \AND \rho = t^{3(1+K)}\rho_c e^{(1+K)\zeta}, \quad \rho_c\in \Rbb_{>0}.
\end{equation} 
Using these variables, the computations carried out in  \cite[\S 2.2]{Oliynyk:CMP_2015} show that relativistic Euler equations can be
cast into the following symmetric hyperbolic form:
\begin{equation} \label{relEulB}
B^k \del{k}V = \frac{1}{t}\Bc \pi V
\end{equation}
where
\begin{align}
V &= (\zeta, v_J )^{\tr} ,  \label{Vdef}\\
v_0 & = \sqrt{|v|^2 +1} , \qquad |v|^2 = \delta^{IJ}v_I v_J,  \label{v0def}\\
v^i & = \delta^{iJ}v_J - \delta^{i}_0 v_0,  \label{viupdef}\\
\Bc &= \frac{-1}{v^0}\begin{pmatrix} 1 & 0 \\ 0 & \frac{1-3K}{v_0}\delta^{JI} \end{pmatrix}, \label{Bcdef}\\
\pi &= \begin{pmatrix} 0 & 0 \\ 0 & \delta_{I}^J \end{pmatrix}, \\
L^k_I &= \delta^k_J - \frac{v_J}{v_0} \delta^k_0, \\
M_{IJ} &= \delta_{IJ} - \frac{1}{(v_0)^2}v_I v_J,  \label{Mdef}\\
B^0 &= \begin{pmatrix} K  &  \frac{K}{v^0}  L^0_M \delta^{MJ} \\ \frac{K}{v^0} \delta^{LI} L^0_L & \delta^{LI} M_{LM} \delta^{MJ} \end{pmatrix}
\intertext{and}
B^K &= \frac{1}{v^0}\begin{pmatrix} Kv^K  &  K  L^K_M \delta^{MJ} \\ K \delta^{LI} L^K_L & \delta^{LI} M_{LM} \delta^{MJ} v^K \end{pmatrix}.
\label{BKdef}
\end{align}

Defining a new modified density variable $\zetat$ by
\begin{equation} \label{zetatdef}
\zetat = \zeta + \ln(v_0),
\end{equation}
we obtain from differentiating \eqref{Vdef} the relation
\begin{equation} \label{dV2dVt}
\del{k} V = Q\del{k}\Vt
\end{equation}
where
\begin{equation} \label{Vtdef}
\Vt = (\zetat, v_J)^{\tr}
\end{equation}
and
\begin{equation*} 
Q= \begin{pmatrix} 1 &  -\frac{1}{v_0^2}v^J \\
0 & \delta^J_I \end{pmatrix}.
\end{equation*}
From \eqref{Vdef}-\eqref{BKdef} and \eqref{dV2dVt}, we see, after multiplying \eqref{relEulB} on the left by $Q^{\tr}$, that the relativistic Euler equations can be expressed 
in terms of the new variables \eqref{Vtdef} as
\begin{equation} \label{relEulC}
\Bt^k \del{k}\Vt = \frac{1}{t}\Bc \pi \Vt
\end{equation}
where 
\begin{align}
\Bt^0 &:= Q^{\tr}B^0 Q = \begin{pmatrix} K & 0 \\ 0 & \delta^{IL}M_{LM}\delta^{MJ}-\frac{K}{v_0^4}v^I v^J\end{pmatrix} 
\label{Bt0def}
\intertext{and}
\Bt^K &:= Q^{\tr}B^K Q =  -\frac{1}{v_0}\begin{pmatrix} K v^K & -\frac{K}{v_0^2}v^J v^K + K\delta^{KJ} \\ 
  -\frac{K}{v_0^2}v^I v^K + K\delta^{KI}&  \bigl(\delta^{IL}M_{LM}\delta^{MJ}+\frac{K}{v_0^4}v^I v^J\bigr)v^K -\frac{K}{v_0^2}
  (v^I \delta^{JK}+v^J \delta^{IK}) \end{pmatrix}. \label{BtIdef}
\end{align}

\section{Transformation to Fuchsian form\label{Ftrans}}
We proceed with the transformation of the relativistic Euler equations into a suitable Fuchsian form
by defining a second change of variables via
\begin{align} 
v_1 &= \frac{t^{-\mu } e^{u(t)+w_1}}{\sqrt{t^{2 \mu
   }
   \left((w_2-w_3)^2+(w_2+w_3)^2\right)+1}}, \label{cov2a} \\
v_2 &= \frac{(w_2+w_3) e^{u(t)+w_1}}{\sqrt{t^{2
   \mu }
   \left((w_2-w_3)^2+(w_2+w_3)^2\right)+1}} \label{cov2b}
 \intertext{and}
v_3 &= \frac{(w_2-w_3) e^{u(t)+w_1}}{\sqrt{t^{2
   \mu }
   \left((w_2-w_3)^2+(w_2+w_3)^2\right)+1}},   \label{cov2c}
\end{align}
where $u(t)$ is a time dependent function  and $\mu\in \Rbb$ is a constant both of which will be fixed below.
Using \eqref{cov2a}-\eqref{cov2c}, we find from differentiating \eqref{Vtdef} that
\begin{align}
\del{t} \Vt &= P \del{t}W + Z  \label{dVt2dWa}
\intertext{and}
\del{I}\Vt &= P \del{I} W \label{dVt2dWb}
\end{align}
where
\begin{gather}
W = (\zetat, w_1, w_2, w_3 )^{\tr} , \label{Wdef}\\ 
Z  = \begin{pmatrix}
 0 \\
 \frac{t^{-\mu -1} e^{\wbr_1} \left(t u'(t)
   \phi-\mu 
   \left(4 t^{2 \mu }
   \left(w_2^2+w_3^2\right)+1\right)\right)}{\phi^{3/2}} \\
 \frac{(w_2+w_3) e^{\wbr_1} \left(t
   u'(t) \phi-2 \mu 
   t^{2 \mu }
   \left(w_2^2+w_3^2\right)\right)}{t \phi^{3/2}} \\
 \frac{(w_2-w_3) e^{\wbr_1} \left(t
   u'(t) \phi -2 \mu 
   t^{2 \mu }
   \left(w_2^2+w_3^2\right)\right)}{t \phi^{3/2}} 
\end{pmatrix}, \label{Zdef} \\
P=\begin{pmatrix}
 1 & 0 & 0 & 0 \\
 0 & \frac{t^{-\mu } e^{\wbr_1}}{\sqrt{\phi}} &
   -\frac{2 w_2 t^{\mu }
   e^{\wbr_1}}{\phi^{3/2}} &
   -\frac{2 w_3 t^{\mu }
   e^{\wbr_1}}{\phi^{3/2}} \\
 0 & \frac{(w_2+w_3)
   e^{\wbr_1}}{\sqrt{\phi}} &
   -\frac{e^{\wbr_1}\eta_3}{\phi^{3/2}} &
   \frac{e^{\wbr_1} \eta_2}{\phi^{3/2}} \\
 0 & \frac{(w_2-w_3)
   e^{\wbr_1}}{\sqrt{\phi}} &
   \frac{e^{\wbr_1}\xi_3}{\phi^{3/2}} &
   -\frac{e^{\wbr_1} \xi_2}{\phi^{3/2}} 
\end{pmatrix} \label{Pdef}
\end{gather}
and we have set
\begin{align}
\wbr_1 &= u+w_1, \label{wbr1def} \\
\phi &= 2 t^{2\mu } \left(w_2^2+w_3^2\right)+1, \label{phidef} \\
\eta_\Lambda &= \left(2 w_\Lambda t^{2 \mu } 
   (w_2-w_3)+(-1)^\Lambda 1\right),  \quad \Lambda =2,3, \label{etadef}
\intertext{and}
\xi_\Lambda &= \left(2 w_\Lambda t^{2 \mu }
   (w_2+w_3)+1\right), \quad \Lambda =2,3. \label{xidef}
\end{align}
By multiplying \eqref{relEulC} on the left by $P^{\tr}$, we see, with the help of \eqref{dVt2dWa}-\eqref{dVt2dWb}, that $W$ satisfies
\begin{equation} \label{relEulD}
A^0 \del{t}W + A^I \del{I}W = \frac{1}{t}P^{\tr}\bigl(\Bc \pi \Vt-t \Bt^0 Z)
\end{equation}
where
\begin{equation} \label{Aidef}
A^i = P^{\tr} \Bt^i P.
\end{equation}

Next, setting
\begin{equation} \label{Acdef}
\Ac^I = (A^0)^{-1}A^I
\end{equation}
and 
\begin{equation} \label{Fcdef}
\Fc =  \frac{1}{t}(A^0)^{-1} P^{\tr}\bigl(\Bc \pi \Vt-t \Bt^0 Z),
\end{equation}
we deduce from \eqref{relEulD} that $W$ satisfies
\begin{equation} \label{relEulE}
\del{t}W + \Ac^I \del{I}W = \Fc.
\end{equation}
Moreover, straightforward, but lengthy, calculations using \eqref{v0def}-\eqref{Mdef},  \eqref{Vtdef}, \eqref{Bt0def}-\eqref{BtIdef}, \eqref{cov2a}-\eqref{cov2c}, \eqref{Zdef}-\eqref{xidef} and \eqref{Aidef}-\eqref{Fcdef} yield the following
explicit formulas for the matrices
$A^0$, $\Ac^I$ and the source term $\Fc$:
\begin{equation}  \label{A0rep}
A^0 = \begin{pmatrix}
 K & 0 & 0 & 0 \\
 0 & \frac{t^{2 \mu } e^{2 \wbr_1}-(K-1) e^{4
   \wbr_1}}{\psi^2} & 0 & 0 \\
 0 & 0 & \frac{2 e^{2 \wbr_1} \left(2 w_3^2
   t^{2 \mu }+1\right)}{\phi^2} &
   -\frac{4 w_2 w_3 t^{2 \mu } e^{2
   \wbr_1}}{\phi^2} \\
 0 & 0 & -\frac{4 w_2 w_3 t^{2 \mu } e^{2
   \wbr_1}}{\phi^2} &
   \frac{2 e^{2 \wbr_1} \left(2 w_2^2 t^{2
   \mu }+1\right)}{\phi^2} \\
\end{pmatrix},
\end{equation}
\begin{equation} \label{Ac1rep}
\Ac^1 = \frac{1}{\sqrt{\frac{t^{2\mu}}{e^{2\wt_1}}+1}}\begin{pmatrix}
 -\frac{1}{\sqrt{\phi}} &
   -\frac{t^{2 \mu }}{\psi \sqrt{\phi}} &
   \frac{2 t^{2\mu } w_2}{\phi^{3/2}} &
   \frac{t^{2\mu} w_3}{\phi^{3/2}}
   \\
 -\frac{K t^{2 \mu } e^{-2
   \wbr_1} \psi}{\sqrt{\phi}\chi} &
   \frac{(2 K-1)
   t^{2 \mu }+(K-1) e^{2
   \wbr_1}}{\sqrt{\phi} \chi} & -\frac{2 K
   t^{2\mu } \psi w_2}{\phi^{3/2}\chi} & -\frac{2 K t^{2\mu }
   \psi w_3}{\phi^{3/2}\chi} \\
 K t^{2\mu }w_2 e^{-2 \wbr_1}
   \sqrt{\phi} & -\frac{K
   t^{2\mu } w_2 \sqrt{\phi}}{\psi} &
   -\frac{1}{\sqrt{\phi}} & 0
   \\
  K t^{2\mu }w_3 e^{-2 \wbr_1} \sqrt{\phi} &
   -\frac{K t^{2\mu } w3 \sqrt{\phi}}{ \psi} & 0 &
  - \frac{1}{\sqrt{\phi}} \\
\end{pmatrix},
\end{equation}
\begin{equation}  \label{Ac2rep}
\Ac^2 = \frac{1}{\sqrt{\frac{t^{2\mu}}{e^{2\wt_1}}+1}}\begin{pmatrix}
 -\frac{t^{\mu } (w_3+w_2)}{\sqrt{\phi}} &
   -\frac{t^{3 \mu } (w_3+w_2)}{\psi \sqrt{\phi}} &
   \frac{t^{\mu } \eta_3}{\phi^{3/2}} &
   -\frac{t^\mu \eta_2}{\phi^{3/2}}
   \\
 -\frac{K t^{3 \mu } (w_2+w_3) e^{-2
   \wbr_1} \psi}{\sqrt{\phi}\chi} &
   \frac{t^{\mu } (w_2+w_3) \left((2 K-1)
   t^{2 \mu }+(K-1) e^{2
   \wbr_1}\right)}{\sqrt{\phi} \chi} & -\frac{K
   t^{\mu } \psi \eta_3}{\phi^{3/2}\chi} & \frac{K t^{\mu }
   \psi\eta_2}{\phi^{3/2}\chi} \\
 -\frac{1}{2} K t^{\mu } e^{-2 \wbr_1}
   \sqrt{\phi} & \frac{K
   t^{\mu } \sqrt{\phi}}{2 \psi} &
   -\frac{t^{\mu } (w_3+w_2)}{\sqrt{\phi}} & 0
   \\
 -\frac{1}{2} K t^{\mu } e^{-2 \wbr_1} \sqrt{\phi} &
   \frac{K t^{\mu } \sqrt{\phi}}{2 \psi} & 0 &
   -\frac{t^{\mu } (w_3+w_2)}{\sqrt{\phi}} \\
\end{pmatrix},
\end{equation}
\begin{equation}  \label{Ac3rep}
\Ac^3 = \frac{1}{\sqrt{\frac{t^{2\mu}}{e^{2\wt_1}}+1}}\begin{pmatrix}
 \frac{t^{\mu } (w_3-w_2)}{\sqrt{\phi}} &
   \frac{t^{3 \mu } (w_3-w_2)}{\psi \sqrt{\phi}} &
   -\frac{t^{\mu } \xi_3}{\phi^{3/2}} &
   \frac{t^\mu \xi_2}{\phi^{3/2}}
   \\
 -\frac{K t^{3 \mu } (w_2-w_3) e^{-2
   \wbr_1} \psi}{\sqrt{\phi}\chi} &
   \frac{t^{\mu } (w_2-w_3) \left((2 K-1)
   t^{2 \mu }+(K-1) e^{2
   \wbr_1}\right)}{\sqrt{\phi} \chi} & \frac{K
   t^{\mu } \psi \xi_3}{\phi^{3/2}\chi} & -\frac{K t^{\mu }
   \psi\xi_2}{\phi^{3/2}\chi} \\
 -\frac{1}{2} K t^{\mu } e^{-2 \wbr_1}
   \sqrt{\phi} & \frac{K
   t^{\mu } \sqrt{\phi}}{2 \psi} &
   \frac{t^{\mu } (w_3-w_2)}{\sqrt{\phi}} & 0
   \\
 \frac{1}{2} K t^{\mu } e^{-2 \wbr_1} \sqrt{\phi} &
   -\frac{K t^{\mu } \sqrt{\phi}}{2 \psi} & 0 &
   \frac{t^{\mu } (w_3-w_2)}{\sqrt{\phi}} \\
\end{pmatrix}
\end{equation}
and
\begin{equation} \label{Fcrep}
\Fc = t^{2\mu -1}\Gc + \Fc_0,
\end{equation}
where
\begin{gather}
\psi = t^{2 \mu }+e^{2 \wbr_1}, \label{psidef}\\
\chi = t^{2\mu }-(K-1) e^{2 \wbr_1}, \label{chidef}
\end{gather}
\begin{equation} \label{Gcdef}
\Gc =\begin{pmatrix}
 0 \\
 -\frac{K (3 K-1) \left(e^{2 w_1}-1\right) e^{2
   u}}{\left((K-1) e^{2 u}-t^{2 \mu }\right)
   \left((K-1) e^{2 \wbr_1}-t^{2 \mu
   }\right)} \\
 0 \\
 0 
\end{pmatrix},
\end{equation}
\begin{equation}  \label{Fc0def}
\Fc_0 = -\frac{\mu}{t}\Pi W
+ \frac{1}{t}
\begin{pmatrix}
 0 \\
 \frac{-(\mu -3 K+1) t^{2 \mu }+t u'\!(t) \left(t^{2
   \mu }-(K-1) e^{2 u(t)}\right)+(-\mu +(\mu
   +3) K-1) e^{2 u(t)}}{ \left((K-1) e^{2 u(t)}-t^{2
   \mu }\right)} \\
 0 \\
 0 
\end{pmatrix}
\end{equation}
and
\begin{equation} \label{Pidef}
\Pi = \begin{pmatrix} 0 & 0 & 0 & 0 \\ 0 & 0 & 0 & 0 \\ 0 & 0 & 1 & 0 \\ 0 & 0 & 0 & 1 \end{pmatrix}.
\end{equation}
For later use, we also define
\begin{equation} \label{Piperpdef}
\Pi^\perp = \id - \Pi,
\end{equation}
and observe that $\Pi$ and $\Pi^\perp$ satisfy the relations
\begin{equation} 
\Pi^2 = \Pi, \quad (\Pi^\perp)^2 = \Pi^\perp, \quad \Pi\Pi^\perp =\Pi^\perp \Pi = 0 \AND \Pi+\Pi^\perp = \id. \label{Pirel} 
\end{equation}

\subsection{Homogeneous solutions\label{Homsec}}
To proceed, we need to identify the homogeneous solutions that we will show are stable to the future under nonlinear perturbations. We
locate these solutions by noting from \eqref{Gcdef} that
\begin{equation*}
\Gc|_{W=0} = 0.
\end{equation*}
From this, \eqref{Fcrep} and \eqref{Fc0def}, it is then clear that the trivial solution $W=0$ will solve \eqref{relEulE} provided that $\mu$ and $u(t)$ are chosen to satisfy
\begin{equation} \label{mufixA}
-\mu +(\mu
   +3) K-1 = 0
\end{equation}
and
\begin{equation} \label{HomeqA}
 u'\!(t) =\frac{(\mu -3 K+1) t^{2 \mu-1}}{t^{2
   \mu }-(K-1) e^{2 u(t)}},
\end{equation}
respectively.
Solving \eqref{mufixA} for $\mu$ yields
\begin{equation} \label{mufixB}
\mu = \frac{3K-1}{1-K},
\end{equation}
which we observe by \eqref{KrangeA} satisfies
\begin{equation} \label{mufixC}
\mu > 0.
\end{equation}
Moreover using \eqref{mufixB}, we note that \eqref{HomeqA} can be expressed as
 \begin{equation} \label{HomeqB}
 u'\!(t) =\frac{K\mu t^{2 \mu-1}}{t^{2
   \mu }+(1-K) e^{2 u(t)}}.
\end{equation}
The following proposition guarantees the existence of solutions to \eqref{HomeqB} that exist for all $t\in (0,1]$.

\begin{prop} \label{Homprop}
Suppose $1/3<K<1$, $\mu = (3K-1)/(1-K)$, and $u_0 \in \Rbb$. Then there exists a unique solution
$u \in C^\infty((0,1]) \cap C^0([0,1])$
to the initial value problem
\begin{align}
 u'\!(t) &=\frac{K\mu t^{2 \mu-1}}{t^{2
   \mu }+(1-K) e^{2 u(t)}},\quad 0<t\leq 1, \label{HomeqB.1} \\
   u(1) &= u_0, \label{HomeqB.2}
\end{align}
that satisfies
\begin{equation}
|u(t)-u(0)| \lesssim t^{2\mu} \AND |u'\!(t)| \lesssim t^{2\mu-1} \label{Hombounds}
\end{equation}
for all $t\in (0,1]$. Moreover, for each $\rho_c\in \Rbb>0$, the solution $u$ determines a homogenous solution of the relativistic Euler \eqref{relEulA}
equations given by
\begin{equation} \label{Homsol}
\bigl(\rho,\vt^i) = \biggl( \frac{\rho_c t^{\frac{2(1+K)}{1-K}}}{(t^{2\mu}+ e^{2u})^{\frac{1+K}{2}}},
-t^{1-\mu}\sqrt{e^{2u}+t^{2 \mu} }, t^{1-\mu }e^{u},0,0\biggr) .
\end{equation}
\end{prop}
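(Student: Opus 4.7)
The central observation is that the apparent singularity of \eqref{HomeqB.1} at $t=0$ is a coordinate artifact that disappears under the change of variable $s = t^{2\mu}$. Setting $f(s)=u(t)$, a direct application of the chain rule turns \eqref{HomeqB.1} into the regular ODE
\begin{equation*}
f'(s) = \frac{K/2}{s+(1-K)e^{2f(s)}},\qquad s\geq 0,
\end{equation*}
together with the initial condition $f(0)=u_0$. The right hand side is $C^\infty$ in $(s,f)$ on $[0,\infty)\times\Rbb$ and locally Lipschitz in $f$, placing us firmly within standard ODE theory.

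The Picard--Lindel\"of theorem then produces a unique $C^1$ local solution $f$ in a neighborhood of $s=0$. Because the right hand side is strictly positive and, since $f$ is therefore increasing, bounded above by $K/\bigl(2(1-K)e^{2u_0}\bigr)$, no finite-time blow-up of $f$ is possible, so $f$ extends to a solution on all of $[0,\infty)$. Standard bootstrapping from the smoothness of the right hand side then upgrades $f$ to $C^\infty([0,\infty))$.

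Transferring back via $u(t)=f(t^{2\mu})$, the smoothness of $f$ together with the smoothness of $t\mapsto t^{2\mu}$ on $(0,1]$ gives $u\in C^\infty((0,1])$, while continuity of $f$ at the origin yields $u\in C^0([0,1])$ with $u(0)=u_0$. The bounds \eqref{Hombounds} are then immediate: applying the mean value theorem to $f$ produces $|u(t)-u(0)|=|f(t^{2\mu})-f(0)|\lesssim t^{2\mu}$, while the chain rule gives $u'(t)=2\mu t^{2\mu-1}f'(t^{2\mu})=\Ord(t^{2\mu-1})$. Uniqueness of $u$ within the stated regularity class follows at once from uniqueness of $f$.

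For the final assertion, the derivation leading to \eqref{mufixA}--\eqref{HomeqB} already shows that with this choice of $\mu$ and $u$ the constant profile $W\equiv 0$ solves \eqref{relEulE}. Inverting the chain of changes of variables \eqref{cov1}, \eqref{zetatdef}, \eqref{cov2a}--\eqref{cov2c} at $w_1=w_2=w_3=\zetat=0$ and using $\mu=(3K-1)/(1-K)$ to collect the powers of $t$ then produces the explicit homogeneous solution \eqref{Homsol}. The only real subtlety in the whole argument is spotting the substitution $s=t^{2\mu}$; once it is in hand, every remaining step reduces to routine ODE manipulations.
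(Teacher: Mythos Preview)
Your proposal is correct and takes a genuinely different route from the paper. The paper exploits separability: it invokes local ODE existence, integrates to obtain the implicit relation $\tfrac{K}{2}\ln(t^{2\mu}+e^{2u})-u=c$, reads off a priori upper and lower bounds on $u$ directly from this algebraic identity, and then concludes global existence on $(0,1]$ by the continuation principle; the estimates \eqref{Hombounds} follow afterwards by integrating the ODE once these bounds are in hand. Your approach instead removes the singularity at $t=0$ via the substitution $s=t^{2\mu}$, reducing everything to a regular IVP at $s=0$ to which Picard--Lindel\"of applies without qualification. This is conceptually cleaner and more robust---it does not rely on separability, and it makes the prescription of data at the singular endpoint $t=0$ entirely transparent, whereas the paper's argument is somewhat informal on that point. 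The paper's route, on the other hand, yields the explicit conserved quantity $c=(K-1)u_0$, which can be useful if one wants sharper control of the solution than the bounds \eqref{Hombounds} provide.
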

\begin{proof}
By standard local existence theorems for ODEs, we know there exists a $T \in [0,1)$ and a unique solution 
$u \in C^\infty((T,1])$ to the initial value problem \eqref{HomeqB.1}-\eqref{HomeqB.2} that can be continued
to smaller times as long as $u(t)$ stays bounded. 
On the other hand, we observe that 
\eqref{HomeqB.1} can be integrated directly to yield the implicit solution
\begin{equation*}
{\textstyle \frac{K}{2}} \ln \bigl(t^{2 \mu }+e^{2
   u(t)}\bigr)- u(t)=c,
\end{equation*}
where the constant $c$ is uniquely determined by the initial condition $u_0$ and the constants $K,\mu$. Solving for $t^{2\mu}$ shows that
\begin{equation*}
e^{\frac{2(c+u(t))}{K}}-e^{2 u(t)} = t^{2\mu}.
\end{equation*}
Since $\mu>0$, this implies the inequality
\begin{equation*}
0\leq e^{\frac{2(c+u(t))}{K}}-e^{2 u(t)} \leq 1, \quad 0\leq T<t \leq 1,
\end{equation*}
from which we deduce the lower bound
\begin{equation} \label{Homprop1}
e^{2 u(t)} \leq e^{\frac{2(c+u(t))}{K}}\quad \Longrightarrow  \quad 2 u(t) \leq  \frac{2(c+u(t))}{K} \quad \Longrightarrow\quad -\frac{c}{1-K}\leq u(t), \quad 0\leq T<t \leq 1.
\end{equation}
On the other hand, since the right hand side of the ODE \eqref{HomeqB.1} is  positive, $u(t)$ must be increasing, and hence, it is bounded above by
\begin{equation} \label{Homprop2}
u(t) \leq u_0  \quad 0\leq T<t \leq 1.
\end{equation}
Thus $u(t)$ is bounded above and below, and so we conclude via the continuation principle for ODEs that the solution $u(t)$ must exist for all $t\in (0,1]$, that is, $T=0$. 

Next, integrating \eqref{HomeqB.1} in time, we see, with the help of the lower and upper bounds \eqref{Homprop1}-\eqref{Homprop2} and the triangle inequality,
that $u(t)$ satisfies the estimate
\begin{equation} \label{Homprop3}
|u(t_2)-u(t_1)| \lesssim \int^{t_2}_{t_1} \tau^{2\mu-1}\, d\tau  \lesssim t_2^{2\mu} - t_1^{2\mu}, \quad 0<t_1< t_2 \leq 1.
\end{equation}
From this, we conclude that the limit $\lim_{t\searrow 0} u(t)$ exists and $u(t)$ extends to a uniformly continuous function on $[0,1]$. Setting $t_2=t$ and sending $t_1\searrow 0$ in \eqref{Homprop3} gives
\begin{equation*}
|u(t)-u(0)| \lesssim t^{2\mu}, \quad 0<t\leq 1.
\end{equation*}
We further note that the inequality
\begin{equation*}
|u'\!(t)| \lesssim t^{2\mu-1},  \quad 0<t\leq 1,
\end{equation*}
follows directly from the bounds  \eqref{Homprop1}-\eqref{Homprop2} and the ODE \eqref{HomeqB.1}. To complete the proof, we observe,
by construction, that the trivial solution $W=0$ to \eqref{relEulE} 
determines via \eqref{cov1}, \eqref{zetatdef}, \eqref{cov2a}-\eqref{cov2c} and \eqref{Wdef} a homogeneous solution to the relativistic 
Euler equations \eqref{relEulA} given by \eqref{Homsol}.
\end{proof}

\subsection{Fuchsian form\label{Fform}}
To complete the transformation of the relativistic Euler equations into Fuchsian form, we let $u(t)$ denote one of the homogeneous solutions to the IVP   \eqref{HomeqB.1}-\eqref{HomeqB.2} from Proposition \ref{Homprop}. Then by construction, $\Fc_0=0$, and so,
\eqref{relEulE} reduces, see \eqref{Fcrep} and \eqref{Fc0def}, to 
\begin{equation} \label{relEulF}
\del{t}W + \Ac^I \del{I}W =-\frac{\mu}{t}\Pi W + t^{2\mu-1}\Gc.
\end{equation}
Applying the projection operator $\Pi$ to this equation and noting the $\Pi \Gc = 0$ by  \eqref{Gcdef} and \eqref{Pidef}, we get that 
\begin{equation*}
\del{t}(\Pi W) + \Pi\Ac^I \del{I}W =-\frac{\mu}{t}\Pi W,
\end{equation*}
which we observe can equivalently written as
\begin{equation}\label{relEulG}
\del{t}(t^{\mu}\Pi W) + t^{\mu}\Pi\Ac^I \del{I}W = 0.
\end{equation}
Next, applying $\Pi^\perp$, see \eqref{Piperpdef}, to \eqref{relEulF} shows, with the help of \eqref{Pirel}, that
\begin{equation*} 
\del{t}(\Pi^\perp W) + \Pi^\perp\Ac^I \del{I}W = t^{2\mu-1}\Pi^\perp\Gc.
\end{equation*}
Adding this equation to \eqref{relEulG} gives
\begin{equation} \label{relEulH}
\del{t}\Wb+t^\mu\Pi\Ac^I \del{I}W+\Pi^\perp\Ac^I \del{I}W =  t^{2\mu-1}\Pi^\perp\Gc
\end{equation}
where we have set
\begin{equation}\label{Wbdef}
\Wb := \Pi^\perp W+t^{\mu}\Pi W = (\zetat,w_1,t^{\mu}w_2,t^{\mu}w_3)^{\tr}.
\end{equation}

We then differentiate \eqref{relEulF} spatially to get
\begin{equation*}
\del{t}\del{J}W + \Ac^I \del{I}\del{J}W + \del{J}\Ac^I \del{I}W  = -\frac{\mu}{t}\Pi \del{J}W + t^{2\mu-1}\del{J}\Gc.
\end{equation*}
Setting
\begin{equation} \label{WbJdef}
\Wb\!_J := t^\mu \del{J}W = (t^\mu\del{J}\zetat,t^\mu \del{J}w_1,t^{\mu}\del{J}w_2,t^{\mu} \del{J} w_3)^{\tr},
\end{equation}
we can write this as
\begin{equation*} 
\del{t}\Wb\!_J + \Ac^I \del{I}\Wb\!_J + \del{J}\Ac^I \Wb_I  = \frac{\mu}{t}\Pi^\perp \Wb\!_J + t^{3\mu-1}\del{J}\Gc.
\end{equation*}
Multiplying on the left by $A^0$ and recalling the definitions \eqref{Acdef}, we find that $\Wb\!_J$ satisfies
\begin{equation}  \label{relEulI}
A^0\del{t}\Wb\!_J + A^I \del{I}\Wb\!_J = \frac{\mu}{t}A^0\Pi^\perp \Wb\!_J +  t^{3\mu-1}A^0\del{J}\Gc- A^0\del{J}\Ac^I \Wb_I.
\end{equation}
Additionally, using the definition \eqref{WbJdef}, we observe that \eqref{relEulH} can be written as
\begin{equation} \label{relEulJ}
\del{t}\Wb=-\Pi\Ac^I \Wb_I- t^{-\mu}\Pi^\perp\Ac^I \Pi \Wb_I + t^{2\mu-1}\Pi^\perp\Gc- t^{-\mu}\Pi^\perp\Ac^I \Pi^\perp \Wb_I.
\end{equation}

Finally, combining \eqref{relEulI} and \eqref{relEulJ} yields the system
\begin{equation} \label{relEulK}
\Asc^0\del{t}\Wsc + \Asc^I \del{I}\Wsc = \frac{\mu}{t}\Asc^0\Pbb \Wsc + \Fsc_0 +\Fsc_1
\end{equation}
where
\begin{align}
\Wsc &= \begin{pmatrix} \Wb \\ \Wb\!_J \end{pmatrix}, \label{Wscdef} \\
\Asc^0 &= \begin{pmatrix} \id & 0 \\ 0 & A^0 \end{pmatrix}, \label{Asc0def} \\
\Asc^I &=  \begin{pmatrix} 0 & 0 \\ 0 & A^I \end{pmatrix}, \label{AscIdef} \\
\Pbb &=  \begin{pmatrix} 0 & 0 \\ 0 & \Pi^\perp \end{pmatrix}, \label{Pbbdef} \\
\Fsc_0 &=\begin{pmatrix} -\Pi\Ac^I \Wb_I- t^{-\mu}\Pi^\perp\Ac^I \Pi \Wb_I + t^{2\mu-1}\Pi^\perp\Gc \\ 
t^{3\mu-1}A^0\del{J}\Gc- A^0\del{J}\Ac^I \Wb_I\end{pmatrix} \label{Fsc0def}
\intertext{and}
\Fsc_1 & =\begin{pmatrix} - t^{-\mu}\Pi^\perp\Ac^I \Pi^\perp \Wb_I \\ 0\end{pmatrix}. \label{Fsc1def}
\end{align}
The point of this system, as will be established in the proof of Theorem \ref{mainthm}, is that it is now of a suitable Fuchsian form to which we can apply the existence theory from \cite{BOOS:2020}. This will allow us to establish the future stability of nonlinear perturbations of the homogeneous solutions
to relativistic Euler equations that are defined by \eqref{Homsol}.

\subsection{Coefficient properties\label{coeff}}
We now turn to showing that the coefficients of the system \eqref{relEulK} satisfy the required properties needed to apply 
 the existence theory from \cite{BOOS:2020} in the proof of Theorem \ref{mainthm}.
To begin, we define
\begin{equation} \label{bvarsdef}
\tb = t^{2\mu}, \AND \wb_\Lambda = t^\mu w_\Lambda, \quad \Lambda=2,3,
\end{equation}
and observe from \eqref{wbr1def}-\eqref{phidef}, \eqref{psidef} and \eqref{A0rep} that the matrix $A^0$ can be treated as a map depending on the variables \eqref{bvarsdef}, that is,
\begin{equation} \label{A0smooth}
A^0 = A^0(\tb,\wbr_1,\wb_2,\wb_3),
\end{equation}
where for each $R>0$ there exists constants $r,\omega >0$ such that $A^0$  is smooth
on the domain defined by
\begin{equation} \label{smoothdom}
(\tb,\wbr_1,\wb_2,\wb_3) \in (-r,2) \times (-R,R) \times (-R,R)\times (-R,R),
\end{equation}
and satisfies
\begin{equation} \label{A0lb}
A^0(\tb,\wbr_1,0,0) \geq \omega \id 
\end{equation}
for all $(\tb,\wbr_1)\in (-\rho,2)\times (-R,R)$. Differentiating $A^0$ with respect to $t$ then shows, with the help of  \eqref{wbr1def}, \eqref{Wbdef}
and \eqref{bvarsdef}, that
\begin{align} 
\del{t}A^0 &= DA^0(\tb,\wbr_1,\wb_2,\wb_3) \begin{pmatrix} 2\mu t^{2\mu-1} \\ u'(t)+\del{t}w_1\\ \del{t}\wb_2\\ \del{t}\wb_3 \end{pmatrix}
\notag \\
&= DA^0(\tb,\wbr_1,\wb_2,\wb_3) \left(\begin{pmatrix} 2\mu t^{2\mu-1} \\ u'(t)\\ 0 \\ 0 \end{pmatrix}
+ \Pc_1\del{t}\Wb \right) \label{dtA0}
\end{align}
where 
\begin{equation*} 
\Pc_1 = \begin{pmatrix} 0 & 0 & 0 & 0 \\ 0 & 1 & 0 & 0 \\ 0 & 0 & 1 & 0 \\ 0 & 0 & 0 & 1 \end{pmatrix},
\end{equation*}
$\del{t}\Wb$ can be computed using the equation of motion \eqref{relEulJ}, and $u'(t)$ is bounded by \eqref{Hombounds}.

Next, setting
\begin{equation} \label{wh1def}
\wh_1 = t^\mu e^{-2 \wbr_1}, 
\end{equation}
it follows from \eqref{wbr1def}-\eqref{etadef}, \eqref{Ac1rep}-\eqref{Ac3rep}, \eqref{psidef}-\eqref{chidef} and \eqref{bvarsdef} that 
we can express the matrices $\Ac^I$ as
\begin{equation} \label{AcIsmooth}
\Ac^I = \Ac^I_1(\wh_1,\wb_2,\wb_3)+ t^\mu \Ac^I_2(\tb,\wbr_1,\wb_2,\wb_3)+ t^{2\mu} \Ac^I_3(\tb,\wbr_1,\wb_2,\wb_3)
\end{equation}
where the $\Ac^I_2$, $\At^I_3$ are smooth on the domain \eqref{smoothdom} and the $\Ac^I_1$ are smooth on the domain
defined by
\begin{equation*}
(\wh_1,\wb_2,\wb_3) \in (-R,R)\times (-R,R)\times (-R,R).
\end{equation*}
 It is also not difficult to verify from \eqref{Ac1rep}-\eqref{Ac3rep} that the $\Ac^I_1$
 satisfy
 \begin{equation} \label{PiperpAcIPi}
 \Pi^\perp \Ac^I_1 \Pi = 0.
 \end{equation}
Differentiating the matrices $\Ac^I$ spatially, we get from \eqref{wbr1def}, \eqref{WbJdef}, \eqref{bvarsdef}, \eqref{wh1def} and
\eqref{AcIsmooth} that
\begin{align}
\del{J}\Ac^I &= D\Ac^I_1(\wh_1,\wb_2,\wb_3)\begin{pmatrix}-2 e^{-2 \wbr_1} t^\mu \del{J}w_1 \\ t^\mu \del{J}w_2 \\
 t^\mu \del{J}w_2
\end{pmatrix}\notag \\
&+ t^\mu D\Ac^I_2(\tb,\wbr_1,\wb_2,\wb_3)\begin{pmatrix} 0 \\ \del{J}w_1\\ t^\mu \del{J}w_2
\\ t^\mu\del{J} w_3 \end{pmatrix} +t^{2\mu} D\Ac^I_3(\tb,\wbr_1,\wb_2,\wb_3)\begin{pmatrix} 0 \\ \del{J}w_1\\ t^\mu \del{J}w_2
\\ t^\mu\del{J} w_3 \end{pmatrix} \notag \\
& =  \Bigl(D\Ac^I_1(\wh_1,\wb_2,\wb_3)\Pc_2 + D\Ac^I_2(\tb,\wbr_1,\wb_2,\wb_3)\Pc_3+t^\mu  D\Ac^I_2(\tb,\wbr_1,\wb_2,\wb_3)\Pc_3\Bigr) \Wb\!_J, \label{dJAcI}
\end{align}
where
\begin{equation*}
\Pc_2 = \begin{pmatrix} 0 & -2 e^{-2 \wbr_1} & 0 & 0 \\ 0 & 0 & 1 & 0 \\ 0 & 0 & 0 & 1 \end{pmatrix} \AND
\Pc_3 = \begin{pmatrix} 0 & 0 & 0 & 0 \\ 0 & 1 & 0 & 0 \\ 0 & 0 & t^\mu & 0 \\ 0 & 0 & 0 & t^\mu \end{pmatrix}.
\end{equation*}

We further observe from \eqref{A0rep} and \eqref{Pidef}-\eqref{Piperpdef} that $A^0$ satisfies
\begin{equation*}
[\Pi^\perp,A^0] = 0
\end{equation*}
and
\begin{equation*}
\Pi^\perp A^0 \Pi = \Pi A^0 \Pi^\perp = 0.
\end{equation*}
From the definitions \eqref{Asc0def} and \eqref{Pbbdef}, it is then clear that $\Asc^0$ satisfies
 \begin{equation} \label{Asc0Pbbcom}
[\Pbb,\Asc^0] = 0
\end{equation}
and
\begin{equation} \label{PbbAsc0Pbbperp}
\Pbb^\perp \Asc^0 \Pbb = \Pbb \Asc^0 \Pbb^\perp = 0,
\end{equation}
where 
\begin{equation} \label{Pbbperpdef}
\Pbb^\perp = \id -\Pbb.
\end{equation}
Additionally, it follows immediately from \eqref{Pidef}-\eqref{Pirel} that $\Pbb$ satisfies
\begin{equation} \label{Pbbrel}
\Pbb^2 = \Pbb,  \quad \Pbb^{\tr} = \Pbb,  \quad \del{t}\Pbb = 0 \AND \del{I} \Pbb = 0,
\end{equation}
while the symmetry of the matrices $\Asc^i$,  that is,
\begin{equation} \label{Ascisym}
(\Asc^i)^{\tr} = \Asc^i
\end{equation}
is an immediate consequence of the definitions \eqref{Bt0def}-\eqref{BtIdef}, \eqref{Aidef}, and \eqref{Asc0def}-\eqref{AscIdef}.

\begin{rem} \label{dtA0rem}
From the definitions  \eqref{wbr1def}, \eqref{Gcdef}, \eqref{Wbdef}, \eqref{WbJdef}, \eqref{Wscdef}, \eqref{Pbbdef}, \eqref{bvarsdef}
and \eqref{wh1def}, the evolution equation  \eqref{relEulJ}, the estimates \eqref{Hombounds} for $u(t)$ and $u'(t)$, the derivative formula
\eqref{dtA0}, and the smoothness properties \eqref{A0smooth} and \eqref{AcIsmooth} of the matrices $A^0$ and $\Ac^I$, respectively, and the identity
\eqref{PiperpAcIPi}, it is not difficult to verify, for $\mu$ satisfying
$0\leq \mu \leq 1/2$ and $R>0$ small enough, that there exists
constants $\theta, \beta>0$ such that $\del{t}A^0$ is bounded by
\begin{equation*}
|\del{t}A^0| \leq \theta t^{2\mu-1} + t^{\mu-1}\beta |\Pbb \Wsc|
\end{equation*}
for all $(t,\Wsc)\in [0,1]\times B_R(\Rbb^{16})$. Furthermore, from the formulas \eqref{Fsc0def}-\eqref{Fsc1def}, it is also clear that
\begin{equation*}
\Pbb\Fsc_1=0
\end{equation*}
and there exists a constant $\lambda>0$ such that $\Pbb^\perp\Fsc_1$ is bounded by
\begin{equation*}
|\Pbb^\perp\Fsc_1| \leq t^{\mu-1}\lambda|\Pbb \Wsc|
\end{equation*}
for all $(t,\Wsc)\in [0,1]\times B_R(\Rbb^{16})$, while $\Fsc_0$ is bounded by
\begin{equation*}
|\Fsc_0| \lesssim t^{2\mu-1}|\Wsc| 
\end{equation*}
for all $(t,\Wsc)\in [0,1]\times B_R(\Rbb^{16})$.

By similar considerations, we see for $\mu$ satisfying $1/2<\mu <1$, that  $\del{t}A^0$,  $\Pbb^\perp \Fsc_1$ and $\Fsc_0$ are bounded by
\begin{align*}
|\del{t}A^0| &\leq \theta t^{1-2\mu} + t^{-\mu}\beta |\Pbb \Wsc|, \\
|\Pbb^\perp\Fsc_1| &\leq t^{-\mu}\lambda |\Pbb \Wsc|
\intertext{and}
|\Fsc_0| &\lesssim t^{1-2\mu}|\Wsc|,
\end{align*}
respectively,
for all $(t,\Wsc)\in [0,1]\times B_R(\Rbb^{16})$.
\end{rem}

\section{Future stability\label{stability}}
We are now ready to establish the future stability of nonlinear perturbations of the homogeneous solutions \eqref{Homsol} of relativistic Euler equations.
\begin{thm} \label{mainthm}
Suppose $k\in\Zbb_{>3/2+3}$, $1/3<K < 1/2$, $\mu = (3K-1)/(1-K)$,  $\sigma > 0$, $u_0\in \Rbb$, $u \in C^\infty((0,1])\cap C^0([0,1])$ is the unique solution to the IVP \eqref{HomeqB.1}-\eqref{HomeqB.2}, and
$\zetat_0, w^0_J \in H^{k+1}(\Tbb^3)$.
Then for $\delta>0$ small enough, there exists a unique solution
\begin{equation*}
W=(\zetat,w_J)^{\tr} \in C^0\bigl((0,1], H^{k+1}(\Tbb^3,\Rbb^4)\bigr)\cap C^1\bigl((0,1],H^{k}(\Tbb^3,\Rbb^4)\bigr)
\end{equation*}
to (see \eqref{relEulD} and \eqref{Fcdef}) the initial value problem 
\begin{align}
A^0 \del{t}W + A^I \del{I}W &= A^0 \Fc &&  \text{in $(0,1]\times \Tbb^3$,} \label{relEulL1} \\
W  &= (\zetat_0, w^0_J)^{\tr} && \text{in $\{1\}\times \Tbb^3$,} \label{relEulL2}
\end{align}
provided that 
\begin{equation*}
\biggl(\norm{\zetat_0}_{H^{k+1}}^2+\sum_{J=1}^3\norm{w^0_J}_{H^{k+1}}^2\biggr)^{\frac{1}{2}}\leq \delta.
\end{equation*}
Moreover, 
\begin{enumerate}[(i)]
\item $W=(\zetat,w_J)^{\tr}$ satisfies the energy estimate
\begin{equation*}
\Ec(t) + \int_t^1 \tau^{2\mu-1}\bigl(\norm{D\zetat(\tau)}_{H^k}^2+\norm{Dw_1(\tau)}_{H^k}^2\bigr)\,d\tau \lesssim \norm{\zetat_0}_{H^{k+1}}^2+\sum_{J=1}^3\norm{w^0_J}_{H^{k+1}}^2
\end{equation*}
for all $t\in (0,1]$
where\footnote{Here, the norm $\norm{Df}_{H^k}$ is defined by  $\norm{Df}^2_{H^k}= \sum_{J=1}^3 \norm{\del{J}f}^2_{H^k}$.}
\begin{equation*}
\Ec(t)=\norm{\zetat(t)}_{H^k}^2+\norm{w_1(t)}_{H^k}^2+t^{2\mu}\Bigl(\norm{D\zetat(t)}_{H^k}^2+\norm{Dw_1(t)}_{H^k}^2+\norm{w_2(t)}_{H^{k+1}}^2+\norm{w_3(t)}_{H^{k+1}}^2\Bigr),
\end{equation*}
\item there exists functions $\zetat_*, w_1^* \in H^{k-1}(\Tbb^3)$ and $\wb_2^*,\wb_3^* \in H^{k}(\Tbb^3)$ such that the estimate
\begin{align*}
\bar{\Ec}(t) \lesssim
 t^{\mu-\sigma}
\end{align*}
holds for all $t\in (0,1]$ where
\begin{equation*}
\bar{\Ec}(t)=\norm{\zetat(t) - \zetat_*}_{H^{k-1}}+\norm{w_1(t) - w_1^*}_{H^{k-1}}
+\norm{t^\mu w_2(t) - \wb_2^*}_{H^{k}}+\norm{t^\mu w_3(t) - \wb_3^*}_{H^{k}},
\end{equation*}
\item and $u$ and $W=(\zetat,w_J)^{\tr}$ determine a solution of the relativistic Euler equations \eqref{relEulA} on the
spacetime region $M=(0,1]\times \Tbb^3$ via the formulas 
\begin{align}
\rho &= \frac{\rho_c t^{\frac{2(1+K)}{1-K}} e^{(1+K)\zetat}}{(t^{2\mu}+ e^{2(u+w_1)})^{\frac{1+K}{2}}}, \label{relEulsol.1}\\
\vt^0 &= -t^{1-\mu}\sqrt{e^{2 (u+w_1)}+t^{2 \mu} },\label{relEulsol.2}\\
\vt^1 &=t^{1-\mu }\biggl(  \frac{e^{u+w_1}}{\sqrt{ (t^{\mu}w_2-t^{\mu}w_3)^2+(t^{\mu}w_2+t^{\mu}w_3)^2+1}} \biggr), \label{relEulsol.3} \\
\vt^2 &= t^{1-\mu }\biggl( \frac{(t^{\mu}w_2+t^{\mu}w_3) e^{u+w_1}}{\sqrt{ (t^{\mu}w_2-t^{\mu}w_3)^2+(t^{\mu}w_2+t^{\mu}w_3)^2+1}}\biggr)\label{relEulsol.4}
 \intertext{and}
\vt^3 &= t^{1-\mu }\biggl( \frac{(t^{\mu}w_2-t^{\mu}w_3) e^{u+w_1}}{\sqrt{ (t^{\mu}w_2-t^{\mu}w_3)^2+(t^{\mu}w_2+t^{\mu}w_3)^2+1}}\biggr).  \label{relEulsol.5}
\end{align}
\end{enumerate}
\end{thm}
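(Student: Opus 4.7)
The plan is to apply the Fuchsian existence theory of \cite{BOOS:2019} directly to the symmetric hyperbolic system \eqref{relEulK}, which by construction is already of the required Fuchsian form $\Asc^0\del{t}\Wsc+\Asc^I\del{I}\Wsc=\frac{1}{t}\Af\Pbb\Wsc+\Fsc$ with $\Af = \mu\Asc^0$. First I would choose a small constant $R>0$ and work inside the ball $B_R(\Rbb^{16})$ on which all the structural identities from Section~\ref{coeff} hold. I would then verify each of the hypotheses of the Fuchsian theorem: symmetry of $\Asc^i$ by \eqref{Ascisym}; the projection properties of $\Pbb$ by \eqref{Pbbrel}; the commutation/block structure $[\Pbb,\Asc^0]=0$ and $\Pbb^\perp\Asc^0\Pbb=0$ by \eqref{Asc0Pbbcom}--\eqref{PbbAsc0Pbbperp}; uniform positivity of $\Asc^0$ by \eqref{A0lb} (after shrinking $R$); smoothness of the coefficients in the variables $(\tb,\wbr_1,\wb_2,\wb_3,\wh_1)$ via \eqref{A0smooth} and \eqref{AcIsmooth}; the crucial identity $\Pi^\perp\Ac^I_1\Pi=0$ from \eqref{PiperpAcIPi} which underlies the fact that the singular terms in the evolution of $\Wb$ are controlled by $\Pbb\Wsc$; and the singular coefficient bounds from Remark~\ref{dtA0rem} for $\del{t}A^0$, $\Pbb\Fsc_1=0$, $|\Pbb^\perp\Fsc_1|\lesssim t^{\mu-1}|\Pbb\Wsc|$, and $|\Fsc_0|\lesssim t^{2\mu-1}|\Wsc|$. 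The hypothesis $1/3<K<1/2$, equivalent to $0<\mu<1$, enters precisely here so that the exponents $2\mu-1$ and $\mu-1$ fall in the admissible range of the Fuchsian theorem and the integrated-in-time terms in the energy estimate are finite.

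Next I would prescribe initial data for $\Wsc$ at $t=1$ by taking $\Wb|_{t=1}=\Pi^\perp(\zetat_0,w^0_J)^\tr+\Pi(\zetat_0,w^0_J)^\tr$ and $\Wb\!_J|_{t=1}=\del{J}(\zetat_0,w^0_J)^\tr$ as dictated by \eqref{Wbdef}--\eqref{WbJdef}. Provided $\delta$ is sufficiently small, the Fuchsian existence theorem then yields a unique solution $\Wsc\in C^0((0,1],H^k)\cap C^1((0,1],H^{k-1})$ together with the energy estimate
\begin{equation*}
\|\Wsc(t)\|_{H^k}^2+\int_t^1\tau^{2\mu-1}\|\Pbb\Wsc(\tau)\|_{H^k}^2\,d\tau\lesssim\|\Wsc(1)\|_{H^k}^2.
\end{equation*}
To pass from $\Wsc$ back to a solution $W$ of the original system \eqref{relEulL1}, I would use a standard constraint propagation argument: define $\Xsc_J:=\Wb\!_J-t^\mu\del{J}W$, where $W$ is recovered from $\Wb$ via the block decomposition $W=\Pi^\perp\Wb+t^{-\mu}\Pi\Wb$, and show from the evolution equations \eqref{relEulI} and \eqref{relEulJ} together with spatial differentiation of \eqref{relEulF} that $\Xsc_J$ satisfies a homogeneous linear Fuchsian system with vanishing data at $t=1$, hence $\Xsc_J\equiv 0$. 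Unpacking the definition of $\Wsc$ and using \eqref{Wbdef}, this yields the energy estimate in statement~(i).

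For part (ii), I would feed the integrated-in-time bound $\int_0^1\tau^{2\mu-1}(\|D\zetat\|_{H^k}^2+\|Dw_1\|_{H^k}^2)\,d\tau<\infty$ and the equation \eqref{relEulJ} into a direct calculation of $\|\del{t}\Wb\|_{H^{k-1}}$, which reveals that $\del{t}\Wb$ is integrable in time with the rate $t^{\mu-1-\sigma}$ for any $\sigma>0$; a Cauchy-in-$t$ argument then produces the limits $(\zetat_*,w_1^*,\wb_2^*,\wb_3^*)$ with the claimed convergence rate $t^{\mu-\sigma}$. Part (iii) is essentially a book-keeping exercise: the change-of-variables formulas \eqref{cov1}, \eqref{zetatdef}, and \eqref{cov2a}--\eqref{cov2c}, taken with the $u$ produced by Proposition~\ref{Homprop}, are exactly \eqref{relEulsol.1}--\eqref{relEulsol.5}, and the derivation in Sections~\ref{relEulsec}--\ref{Ftrans} guarantees that $W$ solves \eqref{relEulL1} if and only if $(\rho,\vt^i)$ defined by these formulas solves \eqref{relEulA}.

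The main technical obstacle I anticipate is the verification of the precise coefficient bounds in Remark~\ref{dtA0rem} in a form that matches the hypotheses of \cite{BOOS:2019}: one must carefully exploit the cancellation $\Pi^\perp\Ac^I_1\Pi=0$ to convert what naively looks like a $t^{-\mu}$ blow-up in $\Fsc_1$ into a bound of the form $t^{\mu-1}|\Pbb\Wsc|$, since $\Pbb\Wsc$ is precisely the object whose square is integrated against the weight $t^{2\mu-1}$ in the energy estimate. A secondary nuisance is controlling the quadratic error terms arising from $\del{J}\Ac^I\Wb_I$ in \eqref{relEulI} through the smallness of the $H^k$ norm of $\Wsc$, which requires $k>3/2+3$ so that Sobolev multiplication and the Moser-type estimates used in \cite{BOOS:2019} apply cleanly to the nonlinear coefficients $A^0(\tb,\wbr_1,\wb_2,\wb_3)$ and $\Ac^I$.
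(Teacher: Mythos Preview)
Your overall strategy---verify that \eqref{relEulK} satisfies the structural hypotheses of the Fuchsian existence theory from \cite{BOOS:2019} and then apply that theory---is exactly what the paper does, and your checklist of ingredients (symmetry, positivity, the commutation relations \eqref{Asc0Pbbcom}--\eqref{PbbAsc0Pbbperp}, the identity \eqref{PiperpAcIPi}, and the bounds of Remark~\ref{dtA0rem}) is correct. Two points of departure are worth noting.

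First, the paper does \emph{not} recover $W$ from $\Wsc$ via constraint propagation. Instead it runs a two-track argument: it first invokes standard local-in-time existence for the symmetric hyperbolic system \eqref{relEulL1}--\eqref{relEulL2} to obtain $W$ on a maximal interval $(T_*,1]$, then builds $\Wsc$ from $W$ by the formulas \eqref{Wbdef}--\eqref{WbJdef}, and separately applies the Fuchsian theorem to the closed system \eqref{relEulK} to produce a solution $\Wsc^*$ on all of $(0,1]$. Uniqueness for the Fuchsian system forces $\Wsc=\Wsc^*$ on $(T_*,1]$, which yields a uniform $H^k$ bound on $\Wb$ and hence, via Sobolev embedding and the continuation principle, $T_*=0$. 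This avoids your constraint-propagation step entirely. Your approach is not wrong, but it is more delicate than you suggest: in the closed system the source terms ``$\del{J}\Gc$'' and ``$\del{J}\Ac^I$'' in \eqref{Fsc0def} are expressed through $\Wb\!_J$ (cf.\ \eqref{dJAcI}) rather than through the genuine spatial derivative $\del{J}\Wb$, so the difference feeds directly into the evolution of $\Xsc_J$; moreover the principal part produces cross terms $A^I(\del{I}\Wb\!_J-\del{J}\Wb\!_I)=A^I(\del{I}\Xsc_J-\del{J}\Xsc_I)$, coupling the constraints for different $J$. All of this can be closed, but it is not the one-line ``homogeneous linear Fuchsian system'' you describe, and the paper's route is cleaner.

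Second, for part~(ii) you do not need to integrate $\del{t}\Wb$ by hand. The decay statements \eqref{decayA1}--\eqref{decayA2} for $\Pbb\Wsc^*$ and $\Pbb^\perp\Wsc^*-\Pbb^\perp\Wsc^*(0)$ are direct outputs of Theorem~3.8 of \cite{BOOS:2019}; the paper simply reads them off and unpacks the definitions of $\Pbb$, $\Wb$, $\Wb\!_J$ to obtain $\bar{\Ec}(t)\lesssim t^{\mu-\sigma}$. A minor correction to your hypothesis verification: your bound $|\Pbb^\perp\Fsc_1|\lesssim t^{\mu-1}|\Pbb\Wsc|$ and $|\Fsc_0|\lesssim t^{2\mu-1}|\Wsc|$ only covers the range $0<\mu\le 1/2$; for $1/2<\mu<1$ the relevant exponents are $t^{-\mu}$ and $t^{1-2\mu}$ (see Remark~\ref{dtA0rem}), and the paper accommodates both regimes by choosing the Fuchsian parameter $p=2\mu$ or $p=2(1-\mu)$ accordingly.
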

\begin{proof}
By \eqref{Bt0def}-\eqref{BtIdef} and \eqref{Aidef}, we know that the matrices $A^i$ are symmetric. Furthermore, from the analysis carried out in Section \ref{coeff}, we know that the maps $A^i$ and $\Fc$ depend smoothly on the variables $(t,\zeta,w_J)$ for $t\in (0,1]$ and $(\zeta,w_J)$ in an open neighborhood of zero, and that the matrix $A^0$ is
positive definite. This shows that the system \eqref{relEulL1} is symmetric hyperbolic. Since $k\in\Zbb_{>3/2+3}$ and 
\begin{equation*}
W_0 :=(\zetat_0, w^0_J)^{\tr}\in H^{k+1}(\Tbb^3,\Rbb^4))
\end{equation*}
by assumption, we can appeal to standard local-in-time existence and uniqueness theorems and the continuation principle for symmetric hyperbolic systems, see Propositions 1.4, 1.5 and 2.1 from \cite[Ch.~16]{TaylorIII:1996},  to conclude that there exists a maximal time $T_* \in [0,1)$ such that
the IVP \eqref{relEulL1}-\eqref{relEulL2} admits a unique solution
\begin{equation*}
W=(\zetat,w_J)^{\tr} \in C^0((T_*,1], H^{k+1}(\Tbb^3,\Rbb^4))\cap C^1((T_*,1],H^{k}(\Tbb^3,\Rbb^4).
\end{equation*}
We also know from the computations carried out in Section \ref{Ftrans} that
\begin{equation} \label{Wscvar}
\Wsc = (\Wb,\Wb\!_J),
\end{equation}
where $\Wb$ and $\Wb\!_J$ are determined from the solution $W$ via the formulas \eqref{Wbdef} and \eqref{WbJdef}, respectively,
will solve the IVP
\begin{align} 
\Asc^0\del{t}\Wsc + \Asc^I \del{I}\Wsc &= \frac{\mu}{t}\Asc^0\Pbb \Wsc + \Fsc_0 + \Fsc_1 && 
\text{in $(T_*,1]\times \Tbb^3$,} 
\label{relEulM1} \\
\Wsc &= \Wsc_0 := (W_0,\del{J}W_0)^{\tr} &&\text{in $\{1\}\times \Tbb^3$.} \label{relEulM2}
\end{align}
We further observe that if the initial data $W_0$ is chosen to satisfy $\norm{W_0}^{H^{k+1}} \leq \delta$, then
\begin{equation*}
\norm{\Wsc_0}_{H^k} \lesssim \norm{W_0}_{H^{k+1}} \leq \delta.
\end{equation*}

On the other hand, we can view \eqref{relEulM1} as an equation for the variables $\Wsc=(\Wb,\Wb_J)$, with $\Wb=(w_1,\wb_2,\wb_3)$ and $\Wb=(w_{1,J},\wb_{1,J},\wb_{1,J})$, where the maps $A^0$, $A^I=A^0\Ac^I$ and $\Fsc_0$, $\Fsc_1$ depend on the variables $(t,\Wb)$ and $\Wsc$ respectively; see Section \ref{coeff} above. 
Then from \textbf{(i)} the smoothness properties \eqref{A0smooth}, \eqref{AcIsmooth} and the identity \eqref{PiperpAcIPi} satisfied by the matrices $A^0$ and $\Ac^I$,
\textbf{(ii)} the
derivative formulas \eqref{dtA0} and \eqref{dJAcI},  \textbf{(iii)} the variable definitions \eqref{wbr1def}, \eqref{Wbdef}, \eqref{WbJdef}, \eqref{bvarsdef},
and \eqref{wh1def}, \textbf{(iv)} the properties of the homogeneous solution $u(t)$ as given by Proposition \ref{Homprop}, \textbf{(v)} the
assumption $1/3<K < 1/2$, which, in particular, implies that $0<\mu <1$, and \textbf{(vi)} the properties \eqref{Asc0Pbbcom}-\eqref{Ascisym} of the matrices
$\Asc^i$ and $\Pbb$, it is not difficult to verify using the definitions \eqref{Wscdef}-\eqref{Fsc1def} that, for $R>0$ chosen small enough, 
there exists, see also Remark \ref{dtA0rem}, constants $\theta,\gamma_1=\gammat_1,\gamma_2=\gamma_2,\lambda_2>0$ and
$\beta_0,\beta_2,\beta_4,\beta_6>0$, where the $\beta_q$  
can be chosen as small as we like by shrinking $R>0$ if necessary, such that system \eqref{relEulM1}
satisfies all the assumptions from Section 3.4 of \cite{BOOS:2020}  for following choice of constants:  
\begin{gather*}
p=\begin{cases}2\mu & \text{if $0<\mu\leq 1/2$} \\
2(1-\mu) & \text{if $1/2<\mu < 1$} \end{cases}, \\
 \kappa=\kappat=\mu
 \intertext{and}
\beta_1=\beta_3=\beta_5=\beta_7=\lambda_1=\lambda_3= \alpha=0.
\end{gather*} As discussed in \cite[\S 3.4]{BOOS:2020}, this implies that under the time transformation\footnote{By our conventions, the time variable $t$ is assumed to be positive as opposed to \cite{BOOS:2020} where it is taken to
be negative. This
causes no difficulties since one can change between these two conventions by using the simple time transformation $t\rightarrow -t$.} $t \mapsto t^p$, the transformed version of \eqref{relEulK} will satisfy all of the assumptions from Section 3.1 of \cite{BOOS:2020}.  Moreover, since the $\Asc^I$ have a regular limit as $t\searrow 0$ (equivalently as $\tau\searrow 0$), the constants $\btt$ and $\tilde{\btt}$ from Theorem 3.8 of \cite{BOOS:2020} will satisfy $\btt=\tilde{\btt}=0$, and
consequently, the constant\footnote{In the article \cite{BOOS:2020}, this constant is denoted by $\zeta$, but since we are already using $\zeta$ to denote the
modified fluid density, we will refer to this parameter as $\mathfrak{z}$.}  $\mathfrak{z}$ that is involved in determining the decay is given by
\begin{equation*}
\mathfrak{z}= \kappa - \Half \gammat_1(\beta_1+(k-1)\tilde{\btt}) = \mu.
\end{equation*}
We can therefore apply Theorem 3.8 from \cite{BOOS:2020}
to the time transformed version of \eqref{relEulM1} as described in \cite[Section 3.4]{BOOS:2020} to deduce, for $\delta>0$ chosen small enough and
the initial data satisfying $\norm{\Wsc(0)}_{H^k}< \delta$, the existence of a unique
solution
\begin{equation*}
\Wsc^* \in C^0\bigl((0,1],H^k(\Tbb^3,\Rbb^{16})\bigr)\cap L^\infty\bigl((0,1],H^k(\Tbb^3,\Rbb^{16}))\bigr)\cap 
C^1\bigl((0,1],H^{k-1}(\Tbb^3,\Rbb^{16})\bigr)
\end{equation*}
to the IVP \eqref{relEulM1}-\eqref{relEulM2} with the following properties:
\begin{enumerate}[(1)]
\item The limit $\lim_{t\searrow 0} \Pbb^\perp \Wsc^*$, denoted $\Pbb^\perp \Wsc^*(0)$, exists in $H^{k-1}(\Tbb^3,\Rbb^{16})$.
\item The solution satisfies the energy estimate
\begin{equation}\label{eestA}
\norm{\Wsc^*(t)}_{H^k}^2 + \int_{t}^1 \frac{1}{\tau} \norm{\Pbb \Wsc^*(\tau)}_{H^k}^2\, d\tau   \lesssim \norm{\Wsc_0}_{H^k}^2
\end{equation}
for all $t\in (0,1]$.
\item The solution decays as $t\searrow 0$ according to
\begin{gather}
\label{decayA1}
\norm{\Pbb \Wsc^*(t)}_{H^{k-1}} \lesssim 
t^{\mu-\sigma} 
\intertext{and}
\label{decayA2}
\norm{\Pbb^\perp \Wsc^*(t) - \Pbb^\perp \Wsc^*(0)}_{H^{k-1}} \lesssim
 t^{\mu-\sigma} 
\end{gather}
for all $t\in (0,1]$.
\end{enumerate}

By uniqueness, the two solutions $\Wsc$ and $\Wsc^*$ to the IVP \eqref{relEulM1}-\eqref{relEulM2} must coincide on their common
domain of definition, and so, we have
\begin{equation*}
\Wsc(t)=\Wsc^*(t), \quad T_*<t \leq 1.
\end{equation*}
But this implies via  \eqref{Wscvar}, the energy estimate \eqref{eestA}, and Sobolev's inequality \cite[Ch.~13, Prop 2.4]{TaylorIII:1996} that
\begin{equation*}
\norm{\Wb(t)}_{W^{1,\infty}} \lesssim \norm{\Wb(t)}_{H^k} \leq \norm{\Wsc(t)}_{H^{k-1}} \lesssim \norm{\Wsc_0},
\quad T^*<t\leq 1.
\end{equation*}
By choosing the initial data $\Wsc_0$ so that $\norm{\Wsc_0}_{H^k}$ is sufficiently small, we can then ensure that
\begin{equation*}
\norm{\Wb(t)}_{W^{1,\infty}} \leq \frac{R}{2}
\quad T^*<t\leq 1,
\end{equation*}
where $R>0$ is as defined in Section \ref{coeff}, which, in particular,
is enough to guarantee that the coefficients $A^i$ and $\Fc$ of \eqref{relEulL1} remain well-defined. By the continuation principle and the maximality of $T_*$, we conclude that $T_*=0$, and hence that
\begin{equation*}
\Wsc(t)=\Wsc^*(t), \quad 0< t\leq  1.
\end{equation*}
From this, the definitions \eqref{Pidef}, \eqref{Piperpdef}, \eqref{Wbdef}, \eqref{WbJdef},  \eqref{Pbbdef} and \eqref{Wscvar},  and  the energy estimate \eqref{eestA}, it is then straightforward to verify
\begin{equation*}
\Ec(t) + \int_t^1 \tau^{2\mu-1}\bigl(\norm{D\zetat(\tau)}_{H^k}^2+\norm{Dw_1(\tau)}_{H^k}^2\bigr)\,d\tau \lesssim \norm{W_0}_{H^k}^2,
\quad 0<t\leq 1,
\end{equation*}
where
\begin{equation*}
\Ec(t)=\norm{\zetat(t)}_{H^k}^2+\norm{w_1(t)}_{H^k}^2+t^{2\mu}\Bigl(\norm{D\zetat(t)}_{H^k}^2+\norm{Dw_1(t)}_{H^k}^2+\norm{w_2(t)}_{H^{k+1}}^2+\norm{w_3(t)}_{H^{k+1}}^2\Bigr).
\end{equation*}
Furthermore, from the decay estimate \eqref{decayA2} and the definition \eqref{Pbbperpdef}, we obtain the existence of functions $\zetat_*, w_1^* \in H^{k-1}(\Tbb^3)$ and $\wb_2^*,\wb_3^* \in H^{k}(\Tbb^3)$ such that the estimate
\begin{align*}
\bar{\Ec}(t) \lesssim
 t^{\mu-\sigma}
\end{align*}
holds for all $t\in (0,1]$, where
\begin{equation*}
\bar{\Ec}(t)=\norm{\zetat(t) - \zetat_*}_{H^{k-1}}+\norm{w_1(t) - w_1^*}_{H^{k-1}}
+\norm{t^\mu w_2(t) - \wb_2^*}_{H^{k}}+\norm{t^\mu w_3(t) - \wb_3^*}_{H^{k}}.
\end{equation*}
To complete the proof, we recall from  \eqref{conformal}, \eqref{cov1}, \eqref{v0def} and \eqref{cov2a}-\eqref{cov2c}, that $u$ and
$W=(\zetat,w_J)^{\tr}$ determine a solution of the relativistic Euler equations \eqref{relEulA} on the spacetime region $M=(0,1]\times \Tbb^3$
via the formulas  \eqref{relEulsol.1}-\eqref{relEulsol.5}.
\end{proof}

\section{$\Tbb^2$-symmetric future stability\label{symsec}} 
In this section, we focus on solutions of the relativistic Euler equations that are independent of the coordinates $(x^2,x^3)\in \Tbb^2$, or in other words,
admit a $\Tbb^2$-symmetry. To find such solutions, we set
\begin{equation} \label{symvars}
\zetat=\ztt(t,x^1), \quad w_1 = \wtt(t,x^1) \AND w_2=w_3=0,
\end{equation}
and observe, with the help of the \eqref{Wdef}, \eqref{wbr1def}-\eqref{xidef}, \eqref{Acdef}, \eqref{Fcdef} and \eqref{A0rep}-\eqref{Pidef},
that this ansatz leads to a consistent reduction of \eqref{relEulD} to a symmetric hyperbolic equations for the variables $(\ztt,\wtt)$ in $1+1$ dimensions
given by
\begin{equation} \label{symEulA}
\Att^0 \del{t}\Wtt+ \Att^1 \del{1}\Wtt = \Ftt
\end{equation}
where
\begin{gather}
\Wtt = (\ztt,\wtt)^{\tr}, \label{Wttdef} \\
\Att^0 =\begin{pmatrix}
 K & 0  \\
 0 & \frac{t^{2 \mu } e^{2 (u+\wtt)}+(1-K) e^{4
   (u+\wtt)}}{(t^{2 \mu }+e^{2
   (u+\wtt)})^2}  
\end{pmatrix} , \label{Att0def}\\
\Att^1 =\frac{1}{\sqrt{e^{2 (u+\wtt)}+t^{2
   \mu } }}\begin{pmatrix}
 -K e^{u+\wtt} & -\frac{K
   t^{2\mu } e^{u+\wtt}}{t^{2 \mu }+e^{2
   (u+\wtt)}} \\
 -\frac{K t^{2\mu } e^{u+\wtt}}{t^{2 \mu
   }+e^{2 (u+\wtt)}} & \frac{
   (2 K-1) t^{2 \mu } e^{3
   (u+\wtt)}+(K-1) e^{5
   (u+\wtt)}}{ (t^{2 \mu }+e^{2
   (u+\wtt)})^2} 
\end{pmatrix}  ,\label{Att1def}\\
\Ftt =\begin{pmatrix}
 0 \\
 -\frac{ t^{2\mu -1 }K (3 K-1) (e^{2 \wtt}-1) e^{4
   u+2 \wtt}}{(t^{2 \mu }-(K-1) e^{2
   u}) (t^{2 \mu }+e^{2
   (u+\wtt)})^2} 
   \end{pmatrix}
   ,\label{Fttdef} 
\end{gather}
and in deriving this equation, we have assumed, as above, that $u=u(t)$ solves the IVP \eqref{HomeqB.1}-\eqref{HomeqB.2}.
 
The system \eqref{symEulA} is almost regular in that $\Att^0(t,\wtt)$, $\Att^1(t,\wtt)$ and $\Ftt(t,\wtt)$ are smooth in $(t,\wtt)$ for $(t,\wtt)\in (0,1]\times \Rbb$ and 
$\Att^0$ and $\Att^1$ are, for any $R>0$, uniformly bounded for
$(t,\wtt)\in (0,1]\times [-R,R]$  by virtue of the assumption $1/3<K<1$, which implies that $\mu>0$. The slight difficulty in establishing existence is that  
$\del{t}A^0(t,\wtt)$  and $\Ftt(t,\wtt)$ are not bounded as $t\searrow 0$ for all $K\in (1/3,1)$.
However, the worst that these coefficients can diverge is like $t^{2\mu -1}$, which is always integrable since $\mu>0$. As we shall
see in the proof of the following theorem, this integrability allows us to modify standard local-in-time 
existence results in a straightforward fashion to establish the existence of solutions of \eqref{symEulA} on $(0,1]\times \Tbb^1$ under a suitable small initial data assumption.

\begin{thm} \label{symthm}
Suppose $k\in \Zbb_{>1/2+1}$, $1/3<K<1$, $\mu=(3K-1)/(1-K)$,  $u_0\in \Rbb$ and $u\in C^\infty((0,1])\cap C^0([0,1])$ is the unique solution
to the IVP \eqref{HomeqB.1}-\eqref{HomeqB.2}, and $\ztt_0,\wtt_0 \in H^k(\Tbb^1)$. Then for $\delta>0$ small enough, there exists a unique solution
\begin{equation*}
\Wtt=(\ztt,\wtt)^{\tr} \in C^0\bigl((0,1],H^k(\Tbb^1,\Rbb^2)\bigr)\cap C^1\bigl((0,1],H^{k-1}(\Tbb^1,\Rbb^2)\bigr)
\end{equation*} 
of the IVP
\begin{align}
\Att^0 \del{t}\Wtt+ \Att^1 \del{1}\Wtt &= \Ftt && \text{in $(0,1]\times \Tbb^1$,}\label{globalS1}\\
\Wtt &= (\ztt_0,\wtt_0)^{\tr} && \text{in $\{1\}\times \Tbb^1,$} \label{globalS2} 
\end{align}
provided that 
\begin{equation*}
\bigl(\norm{\ztt_0}^2_{H^k}+\norm{\wtt_0}^2_{H^k} \bigr)^{\frac{1}{2}} \leq \delta.
\end{equation*}
Moreover,
\begin{enumerate}[(i)]
\item the solution and its time derivative are bounded by
\begin{equation*}
\norm{\Wtt(t)}_{H^k} \lesssim 1 \AND \norm{\del{t}\Wtt(t)}_{H^{k-1}} \lesssim 1+t^{2\mu-1} 
\end{equation*}
respectively, for all $t\in (0,1]$,
\item there exist functions $\ztt_*,\wtt_*\in H^{k-1}(\Tbb^1)$ such that
\begin{equation*}
\norm{\ztt(t)-\ztt_*}_{H^{k-1}}+\norm{\wtt(t)-\wtt_*}_{H^{k-1}} \lesssim t+ t^{2\mu}
\end{equation*}
for all $t\in (0,1]$,
\item and $u$ and $\Wtt=(\ztt,\wtt)^{\tr}$ determine a solution of the relativistic Euler equations \eqref{relEulA} on the spacetime region $M=(0,1]\times \Tbb^3$ via
the formulas
\begin{align*}
\rho &= \frac{\rho_c t^{\frac{2(1+K)}{1-K}} e^{(1+K)\ztt}}{(t^{2\mu}+ e^{2(u+\wtt)})^{\frac{1+K}{2}}}, \\ 
\vt^0 &= -t^{1-\mu}\sqrt{e^{2 (u+\wtt)}+t^{2 \mu} }, \\ 
\vt^1 &=t^{1-\mu }e^{u+\wtt} 
\intertext{and}
\vt^2 &=\vt^3 = 0. 
\end{align*}
\end{enumerate}
\end{thm}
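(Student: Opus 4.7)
The plan is to combine standard local-in-time symmetric hyperbolic theory with an energy estimate whose Gronwall constants remain finite thanks to the $L^1$-integrability of the singular coefficient $\del{t}\Att^0=\Ord(t^{2\mu-1})$ and source $\Ftt=\Ord(t^{2\mu-1})$; this integrability, $\int_0^1\tau^{2\mu-1}\,d\tau=\frac{1}{2\mu}$, holds precisely because $\mu>0$, i.e.\ throughout $K\in(1/3,1)$.

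From the smoothness of $\Att^i$ and $\Ftt$ in $(t,\wtt)\in(0,1]\times\Rbb$, the uniform positive definiteness of $\Att^0$ on $(0,1]\times[-R,R]$, and the bounds \eqref{Hombounds} on $u,u'$, classical symmetric hyperbolic theory \cite[Ch.~16, Props.~1.4, 1.5, 2.1]{TaylorIII:1996} supplies a unique maximal solution $\Wtt\in C^0((T_*,1],H^k(\Tbb^1,\Rbb^2))\cap C^1((T_*,1],H^{k-1}(\Tbb^1,\Rbb^2))$ with $T_*\in[0,1)$. For the energy estimate I set $E(t)=\sum_{\ell\leq k}\int_{\Tbb^1}\langle \Att^0 \del{1}^\ell\Wtt,\del{1}^\ell\Wtt\rangle\, dx^1$; the standard differentiate-commute-integrate-by-parts computation, using symmetry of $\Att^1$ and Moser commutator estimates (valid for $k>1/2+1$), gives
\begin{equation*}
-\frac{d}{dt}E(t)\lesssim\bigl(\norm{\del{t}\Att^0}_{L^\infty}+\norm{\del{1}\Att^1}_{L^\infty}+1\bigr)E(t)+\norm{\Ftt}_{H^k}\sqrt{E(t)}.
\end{equation*}
The explicit formulas \eqref{Att0def}--\eqref{Fttdef} together with \eqref{Hombounds}---and the fact that $\del{t}\wtt$ is controlled via \eqref{symEulA} by $\norm{\del{1}\Wtt}_{L^\infty}+t^{2\mu-1}$---yield $\norm{\del{t}\Att^0}_{L^\infty}\lesssim 1+t^{2\mu-1}$, while the vanishing of $\Ftt$ at $\wtt=0$ produces $\norm{\Ftt}_{H^k}\lesssim t^{2\mu-1}\norm{\Wtt}_{H^k}$ by Moser. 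Integrating backward from $t=1$, the finiteness of $\int_0^1\tau^{2\mu-1}\,d\tau$ makes the Gronwall factor uniformly bounded on $(T_*,1]$, so $\norm{\Wtt(t)}_{H^k}\lesssim\delta$ there. Shrinking $\delta$ forces $\norm{\Wtt}_{W^{1,\infty}}\leq R/2$ via Sobolev embedding, so the continuation principle gives $T_*=0$ and proves the first bound of (i); the second bound is read off directly from \eqref{symEulA} rearranged as $\del{t}\Wtt=(\Att^0)^{-1}(-\Att^1\del{1}\Wtt+\Ftt)$.

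For (ii), the $H^{k-1}$ bound on $\del{t}\Wtt$ from (i) gives, for $0<t_1<t_2\leq 1$,
\begin{equation*}
\norm{\Wtt(t_2)-\Wtt(t_1)}_{H^{k-1}}\leq\int_{t_1}^{t_2}\norm{\del{t}\Wtt(\tau)}_{H^{k-1}}\,d\tau\lesssim (t_2-t_1)+\tfrac{1}{2\mu}(t_2^{2\mu}-t_1^{2\mu}),
\end{equation*}
so $\Wtt(\cdot)$ is Cauchy in $H^{k-1}(\Tbb^1,\Rbb^2)$ as $t\searrow 0$ and sending $t_1\searrow 0$ produces the limit $(\ztt_*,\wtt_*)$ with the claimed decay rate; statement (iii) is then immediate from the ansatz \eqref{symvars}, which substituted into \eqref{relEulsol.1}--\eqref{relEulsol.5} gives the stated formulas for $(\rho,\vt^i)$. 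The main obstacle---and the reason the result covers all of $K\in(1/3,1)$ rather than only $K\in(1/3,1/2)$---is the singular time behavior of the coefficients as $t\searrow 0$: the Fuchsian framework of Section~\ref{Fform} needs $\mu<1$ to fit the hypotheses of \cite{BOOS:2019}, but in the $\Tbb^2$-symmetric reduction the problem is essentially a regular $1+1$ hyperbolic system with time-integrable forcing, so the only constraint is $\mu>0$.
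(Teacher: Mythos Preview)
Your proposal is correct and follows essentially the same route as the paper: local existence via standard symmetric hyperbolic theory, an $\Att^0$-weighted $H^k$ energy estimate obtained by commuting $\del{1}^\ell$ through the equation and using Moser/commutator estimates, Gronwall closed thanks to the $L^1((0,1])$-integrability of $t^{2\mu-1}$, and then the continuation principle to force $T_*=0$; parts (i)--(iii) are handled identically. The only cosmetic difference is that the paper makes the bootstrap step explicit (defining a first time $T_*$ at which $\nnorm{\Wtt}_k$ reaches a threshold $\Rc/2$), whereas you fold this into the phrase ``shrinking $\delta$,'' but the underlying logic is the same.
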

\begin{proof}
Since $K\in (1/3,1)$, the inequality $\mu>0$ holds, and so, fixing $R>0$, we observe from \eqref{Att0def} that there exists a constant
$\gamma>0$
such that
\begin{equation}\label{globalP2}
\frac{1}{\gamma}\id \leq \Att^0(t,\wtt)  \leq \gamma\id 
\end{equation}
for all $(t,\wtt)\in (0,1]\times [-R,R]$. From this inequality and the smooth dependence, see \eqref{Att0def}-\eqref{Fttdef}, of $\Att^0$, $\Att^1$ and $\Ftt$ on $(t,\wtt)$ for $(t,\wtt)\in (0,1]\times \Rbb$,
it follows that the system \eqref{globalS1} is symmetric hyperbolic. Consequently, fixing
$k \in \Zbb_{> 1/2+1}$ and choosing initial data $\ztt_0,\wtt_0\in H^{k}(\Tbb^1)$ satisfying 
\begin{equation} \label{globalP3}
\norm{\Wtt(1)}_{H^k} = \sqrt{\norm{\ztt_0}_{H^k}^2 + \norm{\wtt_0}_{H^k}^2} \leq \delta 
\end{equation} 
for some $\delta>0$, we know from
standard local-in-time existence and uniqueness theorems and the continuation principle for symmetric hyperbolic systems, see Propositions 1.4, 1.5 and 2.1 from \cite[Ch.~16]{TaylorIII:1996}, that there exists 
a unique solution 
\begin{equation*} 
\Wtt=(\ztt, \wtt )^{\tr} \in C^0\bigl((T^*,1],H^{k}(\Tbb^1)\bigr)\cap C^1 \bigl((T^*,1],H^{k-1}(\Tbb^1)\bigr)
\end{equation*}
to \eqref{globalS1} satisfying the initial condition \eqref{globalS2}
for some time $T^*\in [0,1)$, which we can take to be maximal.

Next, applying  $\Att^0\del{1}^\ell (\Att^0)^{-1}$  to  \eqref{globalS1} gives
\begin{equation}  \label{globalP7}
\Att^0 \del{t}\del{1}^\ell \Wtt+ \Att^1 \del{1} \del{1}^\ell \Wtt= \Ftt_\ell, \quad \ell =0,1,\dots,k, 
\end{equation}
where
\begin{equation} \label{globalP9}
\Ftt_\ell =-\Att^0[\del{1}^\ell,(\Att^0)^{-1}\Att^1]\del{1}\Wtt + \Att^0\del{1}^\ell\bigl((\Att^0)^{-1}\Ftt\bigr).
\end{equation}
Employing a standard $L^2$ energy estimate, we obtain  the energy inequality
\begin{equation} \label{globalP10}
-\del{t}\nnorm{\del{1}^\ell \Wtt}_0^2 \leq \norm{\textrm{Div}\Att}_{L^\infty}\norm{\del{1}^\ell \Wtt}_{L^2}^2 + 
2\norm{\del{1}^\ell \Wtt}_{L^2}\norm{\Ftt_\ell }_{L^2}, \quad \ell =0,1,\dots,k, 
\end{equation}
from \eqref{globalP7},
where
\begin{equation*}
\textrm{Div}\Att = \del{t}\Att^0 + \del{1}\Att^1
\end{equation*}
and
\begin{equation*}
\nnorm{(\cdot)}^2_0 = \ip{(\cdot)}{A^0(\cdot)}
\end{equation*}
is the energy norm. 

To proceed, we define the higher energy norms
\begin{equation*}
\nnorm{\Wtt}_k^2= \sum_{\ell=0}^k\nnorm{\del{1}^\ell \Wtt}_0^2,
\end{equation*}
and observe via \eqref{globalP2} that the equivalence of norms 
\begin{equation} \label{globalP11}
 \frac{1}{\sqrt{\gamma}}\norm{\Wtt}_{H^k}\leq \nnorm{\Wtt}_k \leq \sqrt{\gamma}\norm{\Wtt}_{H^k}
\end{equation} 
holds.
Using this equivalence, we obtain, after summing \eqref{globalP10} over $\ell$ from $0$ to $k$, the differential
energy estimate
\begin{equation} \label{globalP12}
-\del{t}\nnorm{\Wtt}_k^2 \lesssim \norm{\textrm{Div}\Att}_{L^\infty}\nnorm{\Wtt}_{k}^2 + 
\nnorm{\Wtt}_{k}\left(\sum_{\ell=0}^k\norm{\Ftt_\ell }_{L^2}\right).
\end{equation}
Since $k>1/2+1$, we have by Sobolev's inequality \cite[Ch.~13, Prop 2.4]{TaylorIII:1996} that
\begin{equation} \label{globalP20}
\norm{\Wtt}_{L^\infty}+ \norm{\del{1} \Wtt}_{L^\infty} \leq C_{\text{Sob}} \norm{\Wtt}_{H^k}
\end{equation}
for some constant $C_{\text{Sob}}>0$ independent of the solution $\Wtt$. We then set
$\Rc = \frac{R}{\sqrt{\gamma}C_{\text{Sob}}}$ so that
\begin{equation} \label{globalP21}
\nnorm{\Wtt}_{k} < \Rc \quad \Longrightarrow \quad  \norm{\Wtt}_{L^\infty}  < R
\end{equation}
by \eqref{globalP11} and \eqref{globalP20}.
We also choose $\delta$, see \eqref{globalP3} above, so that
$0< \delta < \frac{\Rc}{4\sqrt{\gamma}}$ in order to guarantee that $\nnorm{\Wtt(1)}< \frac{\Rc}{4}$,
and we let $T_* \in (T^*,1),$ be the first time such that 
$\norm{\Wtt(T_*)}_{H^k}= \frac{\Rc}{2}$
or if that time does not exist, then we set $T^*=T_*$, the maximal time of existence. In either case, we have that
\begin{equation}\label{globalP22}
\nnorm{\Wtt(t)}_{k} < \frac{\Rc}{2}, \quad 0<T^*\leq T_* < t \leq 1.
\end{equation}

From the formulas \eqref{Att0def}-\eqref{Fttdef} and the bounds \eqref{Hombounds}  obeyed by $u(t)$, it is then
clear that there exists a constant $C_\ell>0$, $\ell\in \Zbb_{\geq 0}$, such that $\Att^0$, $\Att^1$ and $\Ftt$ are bounded by
\begin{gather} 
|\del{\wtt}^\ell \Att^0(t,\wtt)| + |\del{\wtt}^\ell \Att^1(t,\wtt)| \leq C_\ell , \quad
|\del{\wtt}^\ell\del{t} \Att^0(t,\wtt)|  \leq C_\ell (1+t^{2\mu-1}) \label{globalP22a}
\intertext{and}
\bigl|\del{\wtt}^\ell \Ftt(t,\wtt)\bigr|  \leq C_\ell(1+t^{2\mu-1})|\wtt| \label{globalP22c}
\end{gather}
for all $(t,\wtt)\in (0,1]\times [-R,R]$. These bounds in conjunction with the Moser and commutator estimates, see Propositions 3.7 and 3.9 from \cite[Ch.~13]{TaylorIII:1996},
and the inequality \eqref{globalP20}
imply  that
\begin{gather*}
\norm{\textrm{Div}\Att}_{L^\infty} \leq (1+t^{2\mu-1})C(\norm{\Wtt}_{H^k}) \AND
\left(\sum_{\ell=0}^k\norm{\Ftt_\ell }_{L^2}\right) \leq (1+ t^{2\mu-1})C(\norm{\Wtt}_{H^k})\norm{\Wtt}_{H^k}.
\end{gather*}
With the help of these inequalities  and  \eqref{globalP11}, we see that \eqref{globalP12} implies the energy estimate
\begin{equation*}
-\del{t}\nnorm{\Wtt}_k \leq  (1+ t^{2\mu-1})C(\nnorm{\Wtt}_{k})\nnorm{\Wtt}_{k}, \quad 0<T^*\leq T_* < t \leq 1.
\end{equation*}
By Gronwall's inequality, we obtain the bound
\begin{equation} \label{globalP24}
\nnorm{\Wtt(t)}_k\leq e^{C(\Rc)\int_{t}^1 1+\tau^{2\mu-1}\, d\tau}\nnorm{\Wtt(1)}_k, \quad 0<T^*\leq T_* < t \leq 1,
\end{equation}
where in deriving this we have used  \eqref{globalP22}.
But  
\begin{equation*}
\int_{t}^1 1+\tau^{2\mu-1}\, d\tau\lesssim 1, \quad 0<t\leq 1,
\end{equation*}
since $\mu>0$, and consequently, we have by  \eqref{globalP3}, \eqref{globalP11} and \eqref{globalP24} that
\begin{equation*}
\nnorm{\Wtt(t)}_k\leq C(\Rc)\delta, \quad 0<T^*\leq T_* < t \leq 1.
\end{equation*}
By shrinking $\delta>0$ more if necessary, it follows that
\begin{equation} \label{globalP25}
\nnorm{\Wtt(t)}_k\leq \frac{\Rc}{2}, \quad 0<T^*\leq T_* < t \leq 1.
\end{equation}
We therefore conclude by the continuation principle and the definition of $T_*$  that $T_*=T^*=0$, which establishes the existence of a unique solution 
\begin{equation*} 
\Wtt=(\ztt, \wtt )^{\tr} \in C^0\bigl((0,1],H^{k}(\Tbb^1.\Rbb^2)\bigr)\cap C^1 \bigl((0,1],H^{k-1}(\Tbb^1,\Rbb^2)\bigr)
\end{equation*}
to the initial value problem \eqref{globalS1}-\eqref{globalS2}.

Next, by integrating $\del{t}\Wtt$ in time, we  get
\begin{equation} \label{globalP26}
\Wtt(t_2)-\Wtt(t_1) = \int_{t_1}^{t_2} \del{t}\Wtt(\tau)\,d\tau, \quad 0<t_1<t_2\leq 1.
\end{equation}
Using \eqref{globalS1} to write $\del{t}\Wtt$ as
\begin{equation*}
\del{t}\Wtt= (\Att^0)^{-1}[\Att^1 \del{1}\Wtt +\Ftt], 
\end{equation*}
it is not difficult to verify from the bounds \eqref{globalP2}, \eqref{globalP11}, \eqref{globalP20}, \eqref{globalP22a}-\eqref{globalP22c}, and \eqref{globalP25}, where $T_*=T^*=0$, and the Moser estimates that
\begin{equation*}
\norm{\del{t}\Wtt}_{H^{k-1}} \lesssim 1+t^{\mu-1}.
\end{equation*} 
From this estimate and the triangle inequality, we see, after applying the $H^{k-1}$ norm to \eqref{globalP26}, that
\begin{equation} \label{globalP27}
\norm{\Wtt(t_2)-\Wtt(t_1)}_{H^{k-1}} \leq \int_{t_1}^{t_2} \norm{\del{t}\Wtt(\tau)}_{H^{k-1}}\,d\tau  \lesssim |t_2-t_1| +|t_2^{2\mu}-t_1^{2\mu}|,   \quad 0<t_1<t_2\leq 1.
\end{equation}
From this inequality, we conclude that the limit $\lim_{t\searrow 0}\Wtt(t)$, denoted $(\ztt_*,\wtt_*)$, exists in $H^{k-1}(\Tbb^1,\Rbb^2)$.
Furthermore, sending $t_1\searrow 0$ in \eqref{globalP27} shows that
\begin{equation*}
\norm{\ztt(t)-\ztt_*}_{H^{k-1}}+\norm{\wtt(t)-\wtt_*} \lesssim \norm{\Wtt(t)-\Wtt(0)}_{H^{k-1}} \lesssim t+t^{2\mu}, \quad 0\leq t \leq 1.
\end{equation*}
To complete the proof, we observe, by construction, that $\Wtt=(\ztt,\wtt)^{\tr}$ will determine a solution of the relativistic Euler equations \eqref{relEulA}
on the spacetime region $M=(0,1]\times\Tbb^3$ 
by replacing $(\zetat,w_1,w_2,w_3)$ in the formulas \eqref{relEulsol.1}-\eqref{relEulsol.5} with \eqref{symvars}.
\end{proof}

\bigskip

\noindent \textit{Acknowledgements:}
This work was partially supported by the Australian Research Council grant DP170100630.

\bibliographystyle{amsplain}
\bibliography{Kgtot_v7}

\end{document}